\documentclass[pdflatex,sn-mathphys-num]{sn-jnl}

\usepackage{graphicx}
\usepackage{amsmath,amssymb,amsfonts}
\usepackage{amsthm}
\usepackage{mathtools}
\usepackage{mathrsfs}
\usepackage[title]{appendix}
\usepackage{placeins}
\usepackage{xurl}
\usepackage{enumitem}
\usepackage{booktabs}
\usepackage{tabularx}
\usepackage{xcolor}
\usepackage{textcomp}
\usepackage{manyfoot}
\newcolumntype{Y}{>{\raggedright\arraybackslash}X}
\usepackage{etoolbox}
\hypersetup{hypertexnames=false}

\makeatletter
\patchcmd{\@maketitle}{Corresponding author(s). E-mail(s): }{E-mail: }{}{\ClassError{sn-jnl-patch}{e-mail line patch failed}{}}
\def\email#1{\global\advance\emailcnt by 1\relax%
\if@corauemail
   \g@addto@macro\corrauthemail{\setcounter{footnote}{0}\textcolor{blue}{#1}}%
\else
   \g@addto@macro\authemail{\setcounter{footnote}{0}\textcolor{blue}{#1}}%
\fi}
\makeatother

\newcommand{\vE}{\mathbf{E}}
\newcommand{\vB}{\mathbf{B}}
\newcommand{\vD}{\mathbf{D}}
\newcommand{\vH}{\mathbf{H}}
\newcommand{\vP}{\mathbf{P}}
\newcommand{\vM}{\mathbf{M}}
\newcommand{\vJ}{\mathbf{J}}
\newcommand{\vK}{\mathbf{K}}
\newcommand{\vA}{\mathbf{A}}
\newcommand{\vS}{\mathbf{S}}
\newcommand{\vg}{\mathbf{g}}
\newcommand{\vr}{\mathbf{r}}
\newcommand{\vp}{\mathbf{p}}
\newcommand{\zero}{\mathbf{0}}
\newcommand{\nhat}{\hat{\mathbf{n}}}
\newcommand{\khat}{\hat{\mathbf{k}}}
\newcommand{\rhat}{\hat{\mathbf{r}}}
\newcommand{\zhat}{\hat{\mathbf{z}}}
\newcommand{\eps}{\varepsilon_{0}}
\newcommand{\curl}{\nabla\times}
\newcommand{\dive}{\nabla\cdot}
\newcommand{\grad}{\nabla}
\newcommand{\pdt}[1]{\frac{\partial #1}{\partial t}}
\newcommand{\Ylm}{Y_{\ell m}}
\newcommand{\vI}{\mathbb{I}}
\newcommand{\Pperp}{\mathsf{P}_{\!\perp}}
\newcommand{\calR}{\mathcal{R}}
\newcommand{\calJ}{\mathcal{J}}

\theoremstyle{thmstyleone}
\newtheorem{theorem}{Theorem}
\newtheorem{lemma}{Lemma}
\newtheorem{proposition}{Proposition}
\newtheorem{corollary}{Corollary}

\theoremstyle{thmstyletwo}
\newtheorem{remark}{Remark}

\allowdisplaybreaks

\begin{document}
%Sorry to the arXiv team, there was an accidental intend here that I noticed which shifted the positioning of the affiliations with respect to my name.
\title[Scalar-Longitudinal Radiation in Extended Electrodynamics]
{Scalar-Longitudinal Radiation in Extended Electrodynamics with Multipole Theory and a Compensated Source Model}

\author*[1]{\fnm{Natan} \sur{Rentzber}}\email{nrentzbe@uccs.edu}

\affil*[1]{\orgdiv{Center for Magnetism and Magnetic Nanostructures},

\orgname{University of Colorado},
\orgaddress{\city{Colorado Springs}, \postcode{80918}, \country{USA}}}

\abstract{Extended electrodynamics (EED) leaves the Lorenz gauge condition unimposed and treats the scalar combination \(C=\nabla\!\cdot\!\mathbf A+c^{-2}\partial\Phi/\partial t\) as a dynamical field. For a conserved source with no scalar initial field, \(C=0\) and the theory reduces to classical electrodynamics. A source with a nonzero local continuity anomaly has no Maxwell solution, but EED remains well posed and can support a scalar-longitudinal sector. For each radiating frequency of a localized source, the far field separates into the usual transverse Maxwell channel and a scalar-longitudinal channel with a longitudinal electric field, no magnetic field of its own, and a co-propagating \(C\) field. Under the field-only energy balance used here, the time-averaged fluxes add without interference. The scalar channel depends only on \(\Lambda=\partial\rho/\partial t+\nabla\!\cdot\!\mathbf J\) and radiates when its moments at \(k=\omega/c\) are nonzero. An all-orders multipole formula is derived for this flux. Bound polarization and magnetization sources conserve charge identically and cannot excite the scalar channel. A compensated polarized carrier with a globally neutral anomalous surface layer isolates the channel and gives its dipole flux in closed form. The connection between the adopted flux and a physical stress-energy tensor remains unresolved. These results are conditional predictions of EED and do not imply a failure of charge conservation in classical electrodynamics.}

\keywords{Extended Electrodynamics, Scalar Field Dynamics, Charge Conservation, Longitudinal Waves, Auxiliary Fields, Multipole Radiation}

\maketitle

\clearpage
\setcounter{tocdepth}{2}
\tableofcontents
\clearpage

\section{Introduction}
\label{sec:intro}
In classical electrodynamics, meaning standard Maxwell theory throughout this paper, local charge conservation follows from the field equations. Taking the divergence of the Amp\`ere-Maxwell law and using Gauss's law gives $\partial\rho/\partial t+\dive\vJ=0$, so a prescribed source that violates local continuity has no Maxwell solution~\cite{Modanese2017}. Ohmura~\cite{Ohmura1956} instead treated $C=\dive\vA+c^{-2}\partial\Phi/\partial t$ as dynamical. Related formulations were developed by Aharonov and Bohm~\cite{AharonovBohm1963}, van~Vlaenderen and Waser~\cite{vanVlaenderen2001}, and Hively and Giakos~\cite{HivelyGiakos2012}. The framework is called extended electrodynamics (EED) here. Its potential equations remain solvable for prescribed localized sources, while the scalar field carries any continuity anomaly. For a conserved source with no incoming or homogeneous scalar field, $C=0$ and Maxwell theory is recovered.
\\

EED is nonstandard and is not treated as established physics. Charge conservation is tied to gauge invariance and is tightly constrained experimentally~\cite{Okun1989,Borexino2015,Majorana2024}. The aim is therefore conditional. The paper determines the radiation predicted by EED if a prescribed free source has a local anomaly. The main results are a source-independent radiation-zone decomposition, an exact spherical-multipole formula for the scalar channel, no-go results for bound and magnetization sources, and a reception identity for conserved test currents. Earlier work established longitudinal solutions or studied particular anomalous sources~\cite{ReedHively2020,MinottiModanese2021,MinottiModanese2021b,MinottiModanese2022}.
\\

Macroscopic electrodynamics provides a useful check. The standard bound pair $\rho_b=-\dive\vP$ and $\vJ_b=\partial\vP/\partial t+\curl\vM$ conserves charge identically, including the interface terms. Hence the auxiliary fields $\vD$ and $\vH$ add no independent radiative degree of freedom~\cite{Jackson,Griffiths}. A source whose total $\rho$ and $\vJ$ vanish as distributions has no retarded source-generated field in either Maxwell theory or EED.
\\

Section~\ref{sec:decomp} derives the two-channel decomposition. Section~\ref{sec:multipole} gives the scalar multipole formula and its long-wavelength limits. Section~\ref{sec:corollaries} develops the source and detector corollaries. Sections~\ref{sec:electret} and~\ref{sec:compare} construct the compensated source and compare it with Maxwell radiation.
\\

SI units and real time-domain fields are used. Interface components are labeled perpendicular ($\perp$) and parallel ($\parallel$) to a surface. For a continuous Fourier variable $\Omega$,
\begin{equation}
f(\vr,\Omega)=\int_{-\infty}^{\infty}f(\vr,t)e^{i\Omega t}\,dt,
\qquad
f(\vr,t)=\frac{1}{2\pi}\int_{-\infty}^{\infty}f(\vr,\Omega)e^{-i\Omega t}\,d\Omega.
\label{eq:fourier-convention}
\end{equation}
The element $d\Omega$ under frequency integrals refers to this transform variable, while under angular integrals the same symbol denotes the solid angle. A selected monochromatic component at $\omega>0$ is written
\begin{equation}
f(\vr,t)=\mathrm{Re}\!\left[f_{\omega}(\vr)e^{-i\omega t}\right],
\label{eq:phasor-convention}
\end{equation}
where $f_{\omega}$ is a finite phasor amplitude, not the full transform. For an eternal sinusoid,
\begin{equation}
f(\vr,\Omega)=\pi\!\left[f_{\omega}(\vr)\delta(\Omega-\omega)+f_{\omega}^{*}(\vr)\delta(\Omega+\omega)\right],
\end{equation}
and real fields satisfy $f(\vr,-\Omega)=f(\vr,\Omega)^{*}$. This convention is consistent with the frequency-domain treatment used in standard electrodynamics~\cite{Jackson}.
\\

Here ``on shell'' means evaluation at spatial wave vector $k\rhat$, with $k=\omega/c$. The spherical harmonics are the orthonormal complex $Y_{\ell m}$ with the Condon-Shortley phase,
$\oint Y_{\ell m}Y^{*}_{\ell' m'}\,d\Omega=\delta_{\ell\ell'}\delta_{mm'}$.
The transverse projector is $\Pperp(\rhat)=\vI-\rhat\otimes\rhat$. Appendices~\ref{app:poynting}--\ref{app:lagrangian} contain the main derivations.

\section{Extended Electrodynamics}
\label{sec:eed}

\subsection{Field Equations}

EED starts from the sourced wave equations below without imposing the Lorenz condition.
\begin{equation}
\nabla^{2}\Phi-\frac{1}{c^{2}}\frac{\partial^{2}\Phi}{\partial t^{2}}=-\frac{\rho}{\eps},\qquad
\nabla^{2}\vA-\frac{1}{c^{2}}\frac{\partial^{2}\vA}{\partial t^{2}}=-\mu_{0}\vJ.
\label{eq:potwave}
\end{equation}
The electric and magnetic fields keep their usual definitions, $\vE=-\grad\Phi-\partial\vA/\partial t$ and $\vB=\curl\vA$. Under these conventions, $C=\dive\vA+c^{-2}\partial\Phi/\partial t$ has the units of magnetic field, so $cC$ has the units of electric field. The homogeneous equations $\curl\vE=-\partial\vB/\partial t$ and $\dive\vB=0$ hold identically. Taking $\dive\vE$ and $\curl\vB$ and using \eqref{eq:potwave} cancels the $\partial^{2}\Phi/\partial t^{2}$ and $\partial^{2}\vA/\partial t^{2}$ terms against $C$. The extended Gauss and Amp\`ere laws follow.
\begin{equation}
\dive\vE=\frac{\rho}{\eps}-\pdt{C},\qquad
\curl\vB-\frac{1}{c^{2}}\pdt{\vE}=\mu_{0}\vJ+\grad C.
\label{eq:eedfields}
\end{equation}
These equations reduce to Maxwell's equations when $C=0$~\cite{Ohmura1956,vanVlaenderen2001,HivelyGiakos2012}.

\subsection{Scalar Wave Equation}

Applying $\nabla^{2}-c^{-2}\partial^{2}/\partial t^{2}$ to the definition of $C$ and using
\eqref{eq:potwave}, $1/(c^{2}\eps)=\mu_{0}$, gives
\begin{equation}
\nabla^{2}C-\frac{1}{c^{2}}\frac{\partial^{2}C}{\partial t^{2}}=-\mu_{0}\,\Lambda,\qquad \Lambda\equiv\pdt{\rho}+\dive\vJ.
\label{eq:scalarbox}
\end{equation}
Only the charge-conservation anomaly $\Lambda$ sources the scalar field. If $\Lambda=0$ and no incoming or homogeneous scalar field is prescribed, the retarded source-generated solution has $C\equiv0$ and Maxwell theory is recovered. Here zero scalar initial data means that no independent incoming or homogeneous scalar field is present. Homogeneous solutions of the scalar equation, including the scalar-longitudinal wave below, are mathematically allowed, but a conserved source does not generate them from zero initial data.

\subsection{Prescribed Sources and Well-Posedness}

The extended system is well posed for any sufficiently regular localized prescribed pair $(\rho,\vJ)$. This includes compactly supported smooth sources and compactly supported distributions for which the retarded convolution exists. Monochromatic idealizations use the outgoing limiting prescription. Equations~\eqref{eq:potwave} are four decoupled inhomogeneous wave equations with unique retarded solutions. Equations~\eqref{eq:eedfields} then follow identically, while \eqref{eq:scalarbox} shows that $C$ carries the full conservation anomaly. Maxwell theory instead requires $\Lambda=0$ because $\dive(\curl\vB)=0$ and Gauss's law impose it as an integrability condition. A nonconserving source is therefore not an unusual Maxwell radiator. Maxwell's equations have no solution for it~\cite{Modanese2017}. EED remains solvable for such prescribed sources. As in antenna and multipole theory, the sources are treated as formal distributions. Section~\ref{sec:decomp} gives their leading radiation-zone fields.

\subsection[Covariance of C and its Relation to Gauge Fixing]{Covariance of $C$ and its Relation to Gauge Fixing}

The potential wave equations \eqref{eq:potwave}, together with the usual definitions of $\vE$, $\vB$, and $C=\dive\vA+c^{-2}\partial\Phi/\partial t$, are the starting point. In Maxwell theory, $C$ is a gauge quantity. The Lorenz condition may set $C=0$ without changing observable fields. EED leaves that condition unimposed and treats $C$ as a source-driven field through \eqref{eq:scalarbox}. This postulate has a covariant form. Equations~\eqref{eq:potwave} are the Euler-Lagrange equations of
\begin{equation}
\mathcal{L}_{\mathrm{EED}}=-\frac{1}{4\mu_{0}}F_{\mu\nu}F^{\mu\nu}-\frac{1}{2\mu_{0}}\,C^{2}-J^{\mu}A_{\mu},
\qquad C=\partial_{\mu}A^{\mu}.
\label{eq:lagrangian}
\end{equation}
Here the metric signature is $(+,-,-,-)$, $A^{\mu}=(\Phi/c,\vA)$, and $J^{\mu}=(c\rho,\vJ)$. With these definitions, $\partial_{\mu}\partial^{\mu}=c^{-2}\partial_{t}^{2}-\nabla^{2}$ is the negative of the operator in \eqref{eq:potwave}. Varying the action gives $\partial_{\mu}F^{\mu\nu}+\partial^{\nu}C=\mu_{0}J^{\nu}$, or equivalently $\partial_{\mu}\partial^{\mu}A^{\nu}=\mu_{0}J^{\nu}$. This is Eq.~\eqref{eq:potwave}. Taking the four-divergence and using $\partial_{\nu}\partial_{\mu}F^{\mu\nu}\equiv0$ gives \eqref{eq:scalarbox}. Since $C=\partial_{\mu}A^{\mu}$ is a Lorentz scalar, the extended theory is Lorentz covariant. The transverse-longitudinal split below is made in the asymptotic frame of the localized source, as in standard multipole theory.
\\

Up to a total four-divergence, the field part of \eqref{eq:lagrangian} is the covariant Fermi, or Feynman-gauge, Lagrangian~\cite{Fermi1932}, and a related treatment is given in Ref.~\cite{Stueckelberg1938}. The $C^{2}$ term can also be obtained by eliminating a Nakanishi-Lautrup auxiliary field~\cite{Nakanishi1966,Lautrup1967}. EED differs in interpretation and treats $C=\partial_{\mu}A^{\mu}$ as physical rather than as a gauge-fixing quantity. For a conserved source with retarded boundary conditions and no scalar initial field, $C=0$ and no additional observable field remains. Residual transformations satisfying $\partial_{\nu}\partial^{\nu}\chi=0$ are discussed in Appendix~\ref{app:lagrangian}. Some EED formulations define the scalar with the opposite sign~\cite{HivelyGiakos2012}. That convention reverses terms linear in $C$ but leaves quadratic fluxes and powers unchanged. The same quadratic divergence term has also been treated as physical in cosmological models~\cite{JimenezMaroto2009,JimenezMaroto2011}.
\\

When $\Lambda\neq0$, setting $C=0$ is not a gauge choice for the same source-coupled problem. Under $A_{\mu}\to A_{\mu}+\partial_{\mu}\chi$, the interaction action changes by $c^{-1}\!\int\chi\Lambda\,d^{4}x$. Arbitrary-gauge invariance of the coupling and local charge conservation therefore fail together. EED then assumes that the generated $C$ field is physical and assigns it the field-energy balance below. The radiation amplitudes derived later follow from the field equations, while statements about power and energy also depend on that adopted balance.

\subsection{Poynting Balance}

The extended Poynting theorem reads $\partial u/\partial t+\dive\vS
=-\vE\cdot\vJ+c^{2}\rho C$, with
\begin{equation}
u=\frac{\eps}{2}|\vE|^{2}+\frac{1}{2\mu_{0}}|\vB|^{2}+\frac{1}{2\mu_{0}}C^{2},\qquad
\vS=\frac{1}{\mu_{0}}\big(\vE\times\vB\big)+\frac{1}{\mu_{0}}\,C\,\vE.
\label{eq:uS}
\end{equation}
The balance follows algebraically from the field equations. Whether its flux is the flux of a physical EED stress-energy tensor remains unresolved. Accordingly, ``power,'' ``energy,'' and ``flux'' below refer to \eqref{eq:uS}. The far-zone fields and multipole expansion of $C$ do not depend on this choice. The scalar terms are $C^{2}/(2\mu_{0})$ and $C\vE/\mu_{0}$. Section~\ref{sec:discussion} compares this functional with canonical and tensor treatments.

\subsection{Free Scalar-Longitudinal Wave}

In vacuum, Eq.~\eqref{eq:eedfields} admits a longitudinal solution in addition to the two transverse electromagnetic waves. Let $\vE=E_{0}\cos(kz-\omega t)\zhat$, $\vB=\zero$, and $C=C_{0}\cos(kz-\omega t)$, with
\begin{equation}
\omega^{2}=c^{2}k^{2},\qquad C_{0}=E_{0}/c.
\label{eq:slw}
\end{equation}
\\
This scalar-longitudinal wave (SLW) has a longitudinal electric field, no magnetic field, and a co-propagating scalar with a fixed amplitude ratio. One potential representation has phasors $\Phi_{\omega}=0$ and $\vA_{\omega}=(E_{0}/i\omega)\,e^{ikz}\,\zhat$, giving $\vB_{\omega}=\curl\vA_{\omega}=\zero$ and $C_{\omega}=\partial_{z}A_{\omega,z}=(E_{0}/c)\,e^{ikz}$, consistent with \eqref{eq:slw}. Figure~\ref{fig:structure} compares this branch with an ordinary transverse Maxwell wave. Equation~\eqref{eq:uS}, together with $\eps c^{2}=1/\mu_{0}$, gives $u=\eps E_{0}^{2}\cos^{2}(kz-\omega t)$ and $\vS=cu\,\zhat$, with $\vg=u\zhat/c$. Under the adopted balance, the wave transports energy even though $\vB=\zero$. By contrast, a vacuum configuration with $\vE=\vB=\zero$ forces $\partial C/\partial t=0$ and $\grad C=0$ through $\dive\vE=-\partial C/\partial t$ and $c^{-2}\partial\vE/\partial t=-\grad C$. Such a field is spatially uniform, time independent, and carries no flux. Unless confined or subtracted, it is not part of the localized radiation problem.
\\

\begin{remark}[Uniqueness of the Radiative Scalar Branch]
Within the field-energy accounting used here, the SLW is the only propagating scalar branch generated by localized sources from zero scalar initial data. Some papers give separate status to a ``scalar wave''~\cite{HivelyGiakos2012,ReedHively2020}. In the source-driven setting here, the $\vE=\vB=\zero$ argument restricts it to a static background with no flux. Pure potential waves with $\vE=\vB=\zero$ and $C=0$, sometimes called ``gauge waves,'' have also been discussed~\cite{MinottiModanese2023,MinottiModanese2024}. They carry no energy under \eqref{eq:uS}, although anomalous probes could in principle couple directly to the potentials. Such waves lie outside the field-level energy accounting used here.
\end{remark}

\begin{figure}[t]
\centering
\includegraphics[width=0.98\textwidth]{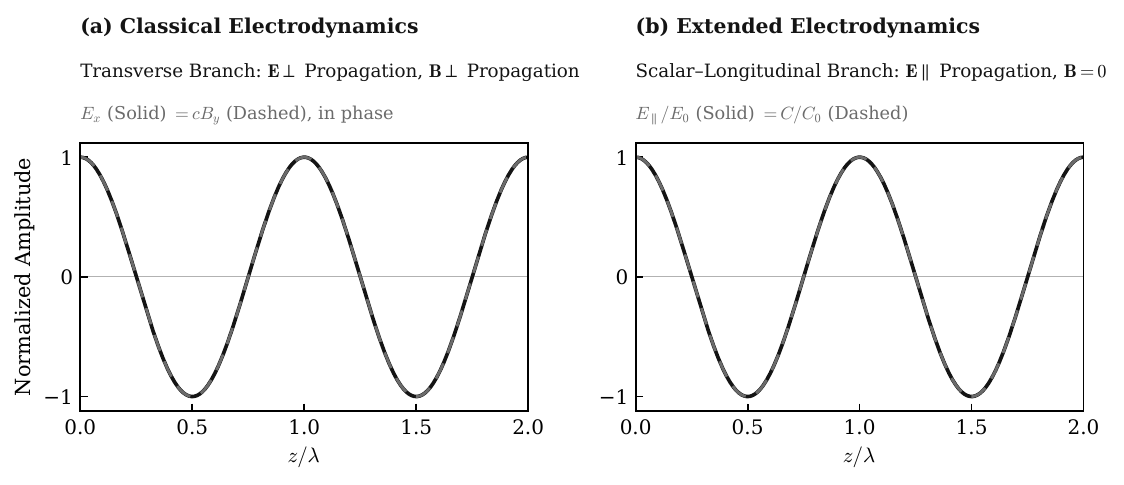}
\caption{Radiative field structure in classical electrodynamics and EED.
\textbf{(a)} An ordinary Maxwell plane wave is transverse, with $\vE\perp\khat$ and $\vB\perp\khat$. The fields $\vE$ and $c\vB$ are in phase.
\textbf{(b)} The EED scalar-longitudinal branch has $\vE\parallel\khat$, $\vB=\zero$, and a co-propagating scalar field $C$ in phase with the longitudinal electric field. Its amplitudes satisfy $C_{0}=E_{0}/c$. Both branches travel at $c$.}
\label{fig:structure}
\end{figure}

\section{Two-Channel Decomposition Theorem}
\label{sec:decomp}
Let the sources lie in a bounded region and consider one component with $\omega>0$, time dependence $e^{-i\omega t}$, $k=\omega/c$, and observation direction $\rhat$. All quantities in this section with subscript $\omega$ are phasor amplitudes. Define
\begin{equation}
\calR_{\omega}(\rhat)=\int \rho_{\omega}(\vr')e^{-ik\rhat\cdot\vr'}\,d^{3}r',
\qquad
\calJ_{\omega}(\rhat)=\int \vJ_{\omega}(\vr')e^{-ik\rhat\cdot\vr'}\,d^{3}r',
\label{eq:on-shell-rhoJ}
\end{equation}
and
\begin{equation}
\Lambda_{\mathrm{on},\omega}(\rhat)=\int \Lambda_{\omega}(\vr')e^{-ik\rhat\cdot\vr'}\,d^{3}r'
=-i\omega\calR_{\omega}(\rhat)+ik\,\rhat\cdot\calJ_{\omega}(\rhat).
\label{eq:on-shell-Lambda}
\end{equation}
The last equality follows by integrating the divergence term by parts. The surface term vanishes for localized sources. With $\Pperp(\rhat)=\vI-\rhat\otimes\rhat$, the retarded potentials give (Appendix~\ref{app:farfield})
\begin{align}
C_{\omega}(\vr)&=\frac{\mu_{0}e^{ikr}}{4\pi r}\,\Lambda_{\mathrm{on},\omega}(\rhat)+O(r^{-2}),
\label{eq:farC}\\
\vE_{T,\omega}(\vr)&=\frac{i\omega\mu_{0}e^{ikr}}{4\pi r}\,\Pperp(\rhat)\calJ_{\omega}(\rhat)+O(r^{-2}),
\label{eq:farET}\\
\vB_{T,\omega}(\vr)&=\frac{1}{c}\,\rhat\times\vE_{T,\omega}(\vr)+O(r^{-2}),
\label{eq:farBT}\\
\vE_{L,\omega}(\vr)&=c\,C_{\omega}(\vr)\,\rhat+O(r^{-2}).
\label{eq:farEL}
\end{align}
Here ``transverse current'' means the on-shell amplitude $\Pperp(\rhat)\calJ_{\omega}(\rhat)$, not a global Helmholtz component of $\vJ$.

\begin{theorem}[Orthogonal Channel Decomposition]
\label{thm:decomp}
For each monochromatic component, the radiation-zone phasors separate as
\begin{equation}
\vE_{\mathrm{rad},\omega}=\vE_{T,\omega}+\vE_{L,\omega},
\qquad
\vE_{L,\omega}=cC_{\omega}\rhat,
\qquad
\vB_{\mathrm{rad},\omega}=\vB_{T,\omega},
\label{eq:fielddecomp}
\end{equation}
with $\vB_{L,\omega}=\zero$. The time-averaged power is
\begin{equation}
\langle P\rangle=\langle P_T\rangle+\langle P_L\rangle,
\label{eq:power-split-a}
\end{equation}
where
\begin{align}
\langle P_T\rangle&=\frac{1}{2\mu_{0}}\oint
\mathrm{Re}\!\left[(\vE_{T,\omega}\times\vB_{T,\omega}^{*})\cdot\rhat\right]r^2\,d\Omega,
\\
\langle P_L\rangle&=\frac{c}{2\mu_{0}}\oint |C_{\omega}|^2r^2\,d\Omega.
\label{eq:power-split}
\end{align}
\end{theorem}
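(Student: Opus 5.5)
The plan is to take the far-field asymptotics \eqref{eq:farC}--\eqref{eq:farEL} as given, derived from the retarded potentials in Appendix~\ref{app:farfield}. Two things then have to be shown: the radiation-zone field structure \eqref{eq:fielddecomp}, and that the time-averaged flux of the adopted vector $\vS$ in \eqref{eq:uS} splits into the two stated pieces with no cross term.

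\textbf{Field structure.} First I write the far-zone potential phasors as $\Phi_{\omega}\simeq\frac{e^{ikr}}{4\pi\eps r}\calR_{\omega}$ and $\vA_{\omega}\simeq\frac{\mu_{0}e^{ikr}}{4\pi r}\calJ_{\omega}$. To leading order in $1/r$, gradients act only on $e^{ikr}$, so $\grad\to ik\rhat$. This gives
\begin{align*}
\vB_{\omega}&\simeq ik\rhat\times\vA_{\omega},\\
\vE_{\omega}&\simeq -ik\rhat\,\Phi_{\omega}+i\omega\vA_{\omega}.
\end{align*}
I then split $\vA_{\omega}=\Pperp\vA_{\omega}+\rhat(\rhat\cdot\vA_{\omega})$.

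The transverse part of $\vE_{\omega}$ is $i\omega\Pperp\vA_{\omega}$, which is \eqref{eq:farET}. The cross product annihilates the radial part, so $\vB_{\omega}$ is purely transverse and equals $c^{-1}\rhat\times\vE_{T,\omega}$. This proves $\vB_{\mathrm{rad},\omega}=\vB_{T,\omega}$ with $\vB_{L,\omega}=\zero$.

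The radial part of $\vE_{\omega}$ is $i\big(\omega\rhat\cdot\vA_{\omega}-k\Phi_{\omega}\big)\rhat$. On the other hand, $C_{\omega}\simeq ik\rhat\cdot\vA_{\omega}-i\omega c^{-2}\Phi_{\omega}$. Multiplying by $c$ and using $\omega=ck$ shows the radial component is exactly $cC_{\omega}$. Consistency with \eqref{eq:farC} follows from \eqref{eq:on-shell-Lambda} together with $\mu_{0}=1/(\eps c^{2})$. This establishes \eqref{eq:fielddecomp}.

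\textbf{Power.} Next I insert the phasors into $\langle\vS\rangle\cdot\rhat=\frac{1}{2\mu_{0}}\mathrm{Re}\big[(\vE\times\vB^{*})\cdot\rhat+C\,\vE^{*}\cdot\rhat\big]$ and expand $\vE=\vE_{T}+\vE_{L}$. There are four terms to handle.
\begin{itemize}
\item $\vE_{L}\times\vB_{T}^{*}$ is parallel to $\rhat\times(\rhat\times\cdot)$, so it is transverse and its $\rhat$ component vanishes.
\item $C\,\vE_{T}^{*}\cdot\rhat=0$, since $\vE_{T}$ is transverse.
\item $C\,\vE_{L}^{*}\cdot\rhat=c|C|^{2}$, which gives $\langle P_{L}\rangle$.
\item $\vE_{T}\times\vB_{T}^{*}$ gives $\langle P_{T}\rangle$.
\end{itemize}
Each surviving term is $O(r^{-2})$ times $|\text{amplitude}|^{2}$. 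The $O(r^{-2})$ corrections to the fields contribute $O(r^{-3})$ to the flux, so after multiplying by $r^{2}d\Omega$ and letting $r\to\infty$ they drop out. This yields \eqref{eq:power-split-a}.

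\textbf{Main obstacle.} The step I expect to need the most care is the time averaging and the justification for treating a single frequency in isolation. For a multi-frequency source, cross-frequency terms average to zero, but only for distinct frequencies over long times, so the theorem should be stated per component. A second subtlety is making sure the $O(r^{-2})$ near-zone pieces of $C$ and $\vE_{L}$ do not combine with leading terms into an $r^{-2}$ flux. Since each leading field is $O(r^{-1})$ and each correction is $O(r^{-2})$, any such cross product is $O(r^{-3})$ and vanishes in the limit.
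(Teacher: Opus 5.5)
Your proposal is correct and follows essentially the same route as the paper. The paper also takes the far-zone fields from Appendix~\ref{app:farfield}, uses $\vE_{T}\cdot\rhat=\vB_{T}\cdot\rhat=0$ and $\vE_{L}\parallel\rhat$ to eliminate the cross terms in the radial flux, and discards the $O(r^{-2})$ field corrections on the far sphere. The differences are cosmetic: the paper substitutes real fields into \eqref{eq:uS} and cycle-averages afterward (giving the phasor form of the cancellation just after the proof), and it obtains $\rhat\cdot\vE_{\omega}=cC_{\omega}$ through the on-shell source transforms rather than directly from the potential phasors as you do.
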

\begin{proof}
At order $1/r$, $\vE_{T,\omega}\cdot\rhat=\vB_{T,\omega}\cdot\rhat=0$ and $\vE_{L,\omega}\parallel\rhat$. For the corresponding real fields, insert $\vE=\vE_T+cC\rhat$ and $\vB=\vB_T$ into \eqref{eq:uS}. Then
\[
[(\vE_T+cC\rhat)\times\vB_T]\cdot\rhat=(\vE_T\times\vB_T)\cdot\rhat,
\qquad
C(\vE_T+cC\rhat)\cdot\rhat=cC^2.
\]
Thus the radial flux has no transverse-longitudinal cross term. Terms involving an $r^{-2}$ field vanish after integration over the far sphere. Cycle averaging gives \eqref{eq:power-split-a}--\eqref{eq:power-split}. The transverse channel depends on $\Pperp(\rhat)\calJ_{\omega}(\rhat)$, whereas the scalar channel depends only on $\Lambda_{\mathrm{on},\omega}(\rhat)$.
\\
\end{proof}

The leading phasor Poynting vector is
\begin{equation}
\langle\vS_{\mathrm{rad}}\rangle=
\left[
\frac{|\vE_{T,\omega}|^2}{2c\mu_{0}}
+\frac{c|C_{\omega}|^2}{2\mu_{0}}
\right]\rhat+O(r^{-3}).
\label{eq:phasor-flux}
\end{equation}
The mixed tangential terms cancel as well. In phasor form, $\vB_{T,\omega}^{*}=c^{-1}\rhat\times\vE_{T,\omega}^{*}$ gives
$cC_{\omega}\,\rhat\times\vB_{T,\omega}^{*}=C_{\omega}\big[\rhat(\rhat\cdot\vE_{T,\omega}^{*})-\vE_{T,\omega}^{*}\big]=-C_{\omega}\vE_{T,\omega}^{*}$, so $cC_{\omega}\,\rhat\times\vB_{T,\omega}^{*}+C_{\omega}\vE_{T,\omega}^{*}=\zero$. The same cancellation holds for the corresponding real fields, as used in the proof.
\\

\begin{remark}[Scope of the Theorem]
The theorem assumes localized sources and the outgoing retarded solution. For a finite-duration square-integrable signal, define $k_{\Omega}=\Omega/c$ and the full-transform moments
\begin{equation}
q^{\mathrm{FT}}_{\ell m}(\Omega)=\int j_{\ell}(k_{\Omega}r')Y_{\ell m}^{*}(\rhat')\Lambda(\vr',\Omega)\,d^{3}r'.
\end{equation}
Excluding eternal monochromatic delta distributions, Parseval's theorem gives
\begin{equation}
W_L=\frac{c\mu_{0}}{\pi}\int_{0}^{\infty}\sum_{\ell m}|q^{\mathrm{FT}}_{\ell m}(\Omega)|^2\,d\Omega.
\label{eq:WL}
\end{equation}
Cross-frequency terms vanish after time integration, while the channel cross term vanishes pointwise.
\\
\end{remark}

\begin{corollary}[Uniqueness of the New Channel]
\label{cor:unique}
Within the source-generated field sector considered here, EED adds one flux-carrying scalar-longitudinal branch to the two transverse Maxwell polarizations. Pure-potential configurations with no field energy are excluded from this field-strength radiation content.
\end{corollary}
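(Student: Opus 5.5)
The plan is to combine Theorem~\ref{thm:decomp} with a classification of the radiation-zone field content by the local structure of the asymptotic fields. Consider a single monochromatic component at order $1/r$. In the far zone every source-generated field is locally a superposition of outgoing plane waves along $\rhat$. The vacuum equations \eqref{eq:eedfields}, with $\Lambda$ absent there, fix the admissible polarization content of each such wave. I would write the leading amplitudes as $\vE=\boldsymbol{\mathcal E}e^{ikr}/r$, $\vB=\boldsymbol{\mathcal B}e^{ikr}/r$, and $C=\mathcal C e^{ikr}/r$, and substitute them into \eqref{eq:eedfields} and into the identities $\curl\vE=i\omega\vB$ and $\dive\vB=0$. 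Keeping only terms of order $1/r$, where $\grad\to ik\rhat$, gives four conditions:
\begin{align*}
ik\,\rhat\times\boldsymbol{\mathcal E}&=i\omega\boldsymbol{\mathcal B},\\
\rhat\cdot\boldsymbol{\mathcal B}&=0,\\
ik\,\rhat\cdot\boldsymbol{\mathcal E}&=i\omega\mathcal C,\\
ik\,\rhat\times\boldsymbol{\mathcal B}+\frac{i\omega}{c^{2}}\boldsymbol{\mathcal E}&=ik\,\mathcal C\,\rhat .
\end{align*}

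Next I would decompose $\boldsymbol{\mathcal E}$ into its $\Pperp(\rhat)$ part and its $\rhat$ part. The first equation then says that $\boldsymbol{\mathcal B}$ is determined by the transverse part alone. The third equation gives $\rhat\cdot\boldsymbol{\mathcal E}=c\,\mathcal C$. The transverse projection of the fourth equation is automatically consistent, and its radial projection reproduces the third. The solution space at each direction $\rhat$ is therefore three-dimensional, spanned by two transverse polarizations and one longitudinal mode $(\boldsymbol{\mathcal E},\boldsymbol{\mathcal B},\mathcal C)=(c\mathcal C\rhat,\zero,\mathcal C)$. This longitudinal mode is exactly the SLW of \eqref{eq:slw} with $C_{0}=E_{0}/c$. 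Equations~\eqref{eq:farC}--\eqref{eq:farEL} show that the retarded source-generated fields populate exactly these modes, with amplitudes fixed by $\Pperp\calJ_{\omega}$ and $\Lambda_{\mathrm{on},\omega}$.

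The flux statement then follows from Theorem~\ref{thm:decomp} and \eqref{eq:phasor-flux}. The longitudinal mode carries power $\langle P_L\rangle$, and this power is nonzero whenever $\Lambda_{\mathrm{on},\omega}\neq0$ on a set of directions of positive measure. It adds to the transverse power without interference, so it is a genuinely new flux-carrying branch. Summing over frequencies, using the Parseval form \eqref{eq:WL}, extends the count to finite-duration signals.

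The main obstacle is the exclusion clause: showing that no further branch, in particular a separate ``scalar wave'', contributes radiated field energy. For this I would use the argument given before the Remark on uniqueness of the radiative scalar branch. Any configuration with $\vE=\vB=\zero$ forces $\partial C/\partial t=0$ and $\grad C=0$ through \eqref{eq:eedfields}. Such a configuration therefore has no $1/r$ oscillating part and contributes nothing to \eqref{eq:uS} at the far sphere. Pure-potential (gauge) configurations, which have $\vE=\vB=\zero$ and $C=0$, give $u=\vS=0$ identically and so lie outside the field-strength content by definition. I would also note that a mode with $\boldsymbol{\mathcal E}=\zero$ but $\mathcal C\neq0$ is ruled out by the third equation above. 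The delicate point is that this exclusion depends on the adopted energy functional \eqref{eq:uS}, so the corollary is stated only within that accounting.
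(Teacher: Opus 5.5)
Your proposal is correct and follows the route the paper relies on. The paper gives no separate proof of this corollary; it reads it off from Theorem~\ref{thm:decomp} (far-zone fields $\vE_T+cC\rhat$ with $\vB=\vB_T$ and no cross flux) together with the $\vE=\vB=\zero$ argument in the Remark on uniqueness of the radiative scalar branch. Your order-$1/r$ mode count from the vacuum equations, two transverse solutions plus the single longitudinal solution $(c\mathcal C\rhat,\zero,\mathcal C)$ per direction, is a useful explicit addition: it makes ``one'' a source-independent statement instead of something read off from the retarded-potential formulas of Appendix~\ref{app:farfield}.
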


\section{Multipole Theory of the Scalar Channel}
\label{sec:multipole}
For a localized harmonic anomaly, \eqref{eq:scalarbox} becomes
$(\nabla^2+k^2)C_{\omega}=-\mu_{0}\Lambda_{\omega}$.
The outgoing Green function and its spherical expansion~\cite{Jackson} give, outside the source,
\begin{equation}
\begin{aligned}
C_{\omega}(\vr)
&=i\mu_{0}k\sum_{\ell=0}^{\infty}\sum_{m=-\ell}^{\ell}
q_{\ell m,\omega}\,h_{\ell}^{(1)}(kr)Y_{\ell m}(\rhat),\\
q_{\ell m,\omega}
&=\int j_{\ell}(kr')Y_{\ell m}^{*}(\rhat')\Lambda_{\omega}(\vr')\,d^3r'.
\end{aligned}
\label{eq:Cmultipole}
\end{equation}
Since $h_{\ell}^{(1)}(kr)\to(-i)^{\ell+1}e^{ikr}/(kr)$,
\begin{equation}
C_{\omega}(\vr)\to\frac{\mu_{0}e^{ikr}}{r}
\sum_{\ell m}(-i)^{\ell}q_{\ell m,\omega}Y_{\ell m}(\rhat).
\label{eq:Cmultipole-far}
\end{equation}

\begin{proposition}[Scalar Radiated Power at All Multipole Orders]
\label{prop:power}
The exact time-averaged scalar-longitudinal power at the selected frequency is
\begin{equation}
\boxed{\langle P_L\rangle=\frac{c\mu_{0}}{2}
\sum_{\ell=0}^{\infty}\sum_{m=-\ell}^{\ell}|q_{\ell m,\omega}|^2}.
\label{eq:Pmaster}
\end{equation}
For a source of radius $R$ with $kR\ll1$,
\begin{equation}
\langle P_{\ell}\rangle\simeq
\frac{c\mu_{0}}{2}\frac{k^{2\ell}}{[(2\ell+1)!!]^2}
\sum_{m=-\ell}^{\ell}|\mathcal M_{\ell m,\omega}|^2,
\qquad
\mathcal M_{\ell m,\omega}=\int r'^{\ell}Y_{\ell m}^{*}(\rhat')\Lambda_{\omega}(\vr')\,d^3r'.
\label{eq:Pell}
\end{equation}
\end{proposition}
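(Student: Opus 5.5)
The plan is to start from the scalar-channel flux of Theorem~\ref{thm:decomp}, namely $\langle P_L\rangle=\frac{c}{2\mu_{0}}\oint |C_{\omega}|^2r^2\,d\Omega$ evaluated as $r\to\infty$, and to insert the multipole expansion \eqref{eq:Cmultipole}. First I will justify that expansion. Outside the source, the outgoing Green function satisfies
\[
\frac{e^{ik|\vr-\vr'|}}{4\pi|\vr-\vr'|}=ik\sum_{\ell m}j_{\ell}(kr')h^{(1)}_{\ell}(kr)Y_{\ell m}(\rhat)Y^{*}_{\ell m}(\rhat')
\]
for $r>r'$. Inserting this into $C_{\omega}=\mu_{0}\int G\,\Lambda_{\omega}\,d^3r'$ gives \eqref{eq:Cmultipole}. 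Using $h_{\ell}^{(1)}(kr)\sim(-i)^{\ell+1}e^{ikr}/(kr)$ then yields \eqref{eq:Cmultipole-far}. For a bounded source the $j_\ell$ factor makes $q_{\ell m}$ decay superexponentially in $\ell$, so the sum converges uniformly on the sphere and the asymptotic limit may be taken term by term.

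Next I will compute the power. From \eqref{eq:Cmultipole-far},
\[
r^2|C_{\omega}|^2\to\mu_{0}^2\Big|\sum_{\ell m}(-i)^{\ell}q_{\ell m}Y_{\ell m}(\rhat)\Big|^2 .
\]
Integrating over solid angle with the orthonormality $\oint Y_{\ell m}Y^{*}_{\ell'm'}d\Omega=\delta_{\ell\ell'}\delta_{mm'}$ kills all cross terms. The phases $(-i)^{\ell}$ have unit modulus and drop out, leaving $\mu_{0}^2\sum|q_{\ell m}|^2$. Multiplying by $c/(2\mu_{0})$ gives \eqref{eq:Pmaster}. The $O(r^{-2})$ corrections contribute nothing, as argued in the proof of Theorem~\ref{thm:decomp}. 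I will also cross-check against \eqref{eq:farC}. Expanding the plane wave $e^{-ik\rhat\cdot\vr'}$ in spherical harmonics shows that $\Lambda_{\mathrm{on},\omega}(\rhat)=4\pi\sum(-i)^{\ell}q_{\ell m}Y_{\ell m}(\rhat)$, which matches the far-field coefficient.

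For the long-wavelength limit, I will use $j_{\ell}(x)=x^{\ell}/(2\ell+1)!!\,[1+O(x^2)]$ for $x=kr'\le kR\ll1$. This gives $q_{\ell m}\simeq k^{\ell}\mathcal M_{\ell m}/(2\ell+1)!!$, and the $\ell$-th term of \eqref{eq:Pmaster} becomes \eqref{eq:Pell}.

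The main point requiring care is the meaning of ``$\simeq$''. It holds only at leading order, with relative error $O((kR)^2)$, provided $\mathcal M_{\ell m}\neq0$. If the $\ell$-th moment vanishes, the next term in the $j_\ell$ expansion, involving $r'^{\ell+2}$, dominates instead. I will state this caveat rather than bound it sharply, since the boxed exact formula does not depend on it.
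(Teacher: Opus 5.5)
Your proposal is correct and follows the paper's proof. You insert the far-zone form \eqref{eq:Cmultipole-far} into the scalar flux of Theorem~\ref{thm:decomp}, let orthonormality remove the cross terms and the unit-modulus phases $(-i)^{\ell}$, and keep the leading term $(kr')^{\ell}/(2\ell+1)!!$ of $j_{\ell}(kr')$. Your additions (the check $\Lambda_{\mathrm{on},\omega}(\rhat)=4\pi\sum_{\ell m}(-i)^{\ell}q_{\ell m,\omega}Y_{\ell m}(\rhat)$ and the leading-order caveat) are sound. The interchange of $r\to\infty$ with the $\ell$-sum is handled more cleanly by applying orthonormality at finite $r$, which gives $\oint|C_{\omega}|^{2}r^{2}\,d\Omega=\mu_{0}^{2}\sum_{\ell m}|q_{\ell m,\omega}|^{2}\,k^{2}r^{2}|h^{(1)}_{\ell}(kr)|^{2}$. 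Each factor $k^{2}r^{2}|h^{(1)}_{\ell}(kr)|^{2}$ is non-increasing in $r$ and tends to $1$, so monotone convergence applies, rather than relying on uniform convergence on the sphere.
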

\begin{proof}
Write \eqref{eq:Cmultipole-far} as $C_{\omega}=e^{ikr}f(\rhat)/r$. Equation~\eqref{eq:power-split} and spherical-harmonic orthonormality give
$\oint|f|^2d\Omega=\mu_{0}^2\sum_{\ell m}|q_{\ell m,\omega}|^2$, which yields \eqref{eq:Pmaster}. The leading term
$j_{\ell}(kr')\simeq(kr')^{\ell}/(2\ell+1)!!$ gives \eqref{eq:Pell}.
\\
\end{proof}

\begin{corollary}[Long-Wavelength Multipole Hierarchy]
\label{cor:hierarchy}
For comparable reduced moment norms, successive multipoles satisfy
$\langle P_{\ell+1}\rangle/\langle P_{\ell}\rangle\sim(kR)^2$.
More explicitly, within the leading long-wavelength approximation,
\begin{equation}
\frac{\langle P_{\ell+1}\rangle}{\langle P_{\ell}\rangle}
\simeq\frac{k^2}{(2\ell+3)^2}
\frac{\sum_m|\mathcal M_{\ell+1,m,\omega}|^2}
{\sum_m|\mathcal M_{\ell m,\omega}|^2}.
\label{eq:multipole-ratio}
\end{equation}
Here ``comparable'' means
$\sum_m|\mathcal M_{\ell+1,m,\omega}|^2\sim R^2\sum_m|\mathcal M_{\ell m,\omega}|^2$.
Symmetry or cancellation can change the ordering.
\end{corollary}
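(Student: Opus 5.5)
The plan is to take the ratio of two consecutive instances of the long-wavelength formula \eqref{eq:Pell} from Proposition~\ref{prop:power}. Write $\langle P_{\ell}\rangle\simeq\frac{c\mu_0}{2}\frac{k^{2\ell}}{[(2\ell+1)!!]^2}\sum_m|\mathcal M_{\ell m,\omega}|^2$ and the same expression with $\ell\to\ell+1$. The prefactor $c\mu_0/2$ cancels. The powers of $k$ leave $k^{2(\ell+1)}/k^{2\ell}=k^2$. The double factorials use $(2\ell+3)!!=(2\ell+3)(2\ell+1)!!$, so their squared ratio is $1/(2\ell+3)^2$. This gives \eqref{eq:multipole-ratio} directly, valid to the same leading order in $kR$ as \eqref{eq:Pell}, assuming $\sum_m|\mathcal M_{\ell m,\omega}|^2\neq0$ so the ratio is defined.

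For the scaling statement, I would insert the definition of ``comparable'' moment norms, $\sum_m|\mathcal M_{\ell+1,m,\omega}|^2\sim R^2\sum_m|\mathcal M_{\ell m,\omega}|^2$. The ratio then becomes $\sim (kR)^2/(2\ell+3)^2$. Since $(2\ell+3)^{-2}$ is an $O(1)$ number for fixed $\ell$, this gives $\langle P_{\ell+1}\rangle/\langle P_\ell\rangle\sim(kR)^2$. It also helps to note why comparability is natural: because $\Lambda_\omega$ is supported in $r'\le R$, one has $|\mathcal M_{\ell m,\omega}|\le R^{\ell}\int|\Lambda_\omega|\,d^3r'$ (using $|Y_{\ell m}|\le\sqrt{(2\ell+1)/4\pi}$). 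Hence each added order of $\ell$ costs at most one power of $R$, which makes the stated comparability condition the generic case.

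The main point requiring care is the consistency of the approximation rather than the algebra. The leading-term replacement $j_\ell(kr')\simeq(kr')^\ell/(2\ell+1)!!$ has relative corrections of order $(kr')^2/(2(2\ell+3))$. The $\ell$-th exact moment therefore contains a correction of the same order in $k$ as the leading $(\ell+2)$ term, but it does not mix in order $\ell+1$. Consequently the ratio of leading terms is accurate up to a factor $1+O((kR)^2)$ whenever the leading moments do not vanish. I would also remark that if symmetry makes $\mathcal M_{\ell m,\omega}$ vanish for some $\ell$, the leading approximation degenerates: the subleading Bessel term must then be kept and the ordering can change. This is the caveat stated in the corollary.
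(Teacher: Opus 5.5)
Your proposal is correct and follows the paper's route. The paper gives no separate proof: it obtains the corollary exactly as you do, by dividing consecutive instances of the long-wavelength formula \eqref{eq:Pell} from Proposition~\ref{prop:power}, using $(2\ell+3)!!=(2\ell+3)(2\ell+1)!!$, and then inserting the stated comparability condition. One small slip in your heuristic aside: the bound should read $|\mathcal M_{\ell m,\omega}|\le\sqrt{(2\ell+1)/4\pi}\,R^{\ell}\int|\Lambda_\omega|\,d^3r'$, since the harmonic factor you cite does not drop out (it exceeds $1$ for $\ell\ge6$). This does not affect the argument.
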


\subsection[Monopole (l = 0)]{Monopole ($\ell=0$)}
In the long-wavelength limit,
$\mathcal M_{00,\omega}=q_{\Lambda,\omega}/\sqrt{4\pi}$, where
$q_{\Lambda,\omega}=\int\Lambda_{\omega}\,d^3r'$ is the phasor amplitude of $dQ/dt$.
For $Q(t)=Q_{0}\sin\omega t$, $dQ/dt=\lambda(t)=\lambda_{0}\cos\omega t$ with $\lambda_{0}=Q_{0}\omega$. Equation~\eqref{eq:Pell} gives
\begin{equation}
\langle P\rangle_{\mathrm{monopole}}\simeq
\frac{\mu_{0}c\lambda_0^2}{8\pi}
=\frac{\mu_{0}cQ_0^2\omega^2}{8\pi}.
\label{eq:Pmono}
\end{equation}
\\
This mode requires oscillation of the total charge and is therefore excluded by Maxwell theory.

\subsection[Dipole (l = 1)]{Dipole ($\ell=1$)}
Using
$\sum_m|\mathcal M_{1m,\omega}|^2=(3/4\pi)|\vp_{\Lambda,\omega}|^2$ and
$\vp_{\Lambda,\omega}=\int\vr'\Lambda_{\omega}\,d^3r'$, the leading dipole power is
\begin{equation}
\langle P\rangle_{\mathrm{dipole}}\simeq
\frac{c\mu_{0}k^2}{24\pi}|\vp_{\Lambda,\omega}|^2
=\frac{\mu_{0}}{12\pi c}
\left\langle\left|\frac{d\vp_{\Lambda}}{dt}\right|^2\right\rangle.
\label{eq:Pdip}
\end{equation}
The exact time-domain anomaly dipole obeys
\begin{equation}
\vp_{\Lambda}=\int\vr\Lambda\,d^3r
=\frac{d}{dt}\int\vr\rho\,d^3r-\int\vJ\,d^3r.
\label{eq:panomaly}
\end{equation}
Thus a conserved source satisfies $d\vp/dt=\int\vJ\,d^3r$, while a currentless anomalous source has $\vp_{\Lambda}=d\vp/dt$. For a general complex dipole amplitude,
\begin{equation}
\left\langle\frac{dP_L}{d\Omega}\right\rangle_{\mathrm{dipole}}
\simeq\frac{c\mu_{0}k^2}{32\pi^2}
|\rhat\cdot\vp_{\Lambda,\omega}|^2.
\label{eq:dPdip}
\end{equation}
For $\vp_{\Lambda,\omega}=p_{\Lambda,0}\zhat$, this becomes
$\frac{c\mu_{0}k^{2}}{32\pi^{2}}|p_{\Lambda,0}|^{2}\cos^{2}\theta$.
The pattern is strongest along the dipole axis, unlike the $\sin^2\theta$ pattern of an ordinary electric dipole. Integration over solid angle returns \eqref{eq:Pdip}.

\section{Corollaries for Bound Sources, Asymmetry, and Detectors}
\label{sec:corollaries}

Write the total sources as $\rho=\rho_{f}+\rho_{b}$ and $\vJ=\vJ_{f}+\vJ_{b}$. The standard bound sources are
$\rho_{b}=-\dive\vP$ and $\vJ_{b}=\partial\vP/\partial t+\curl\vM$.

\begin{lemma}[The Bound Charge-Current Pair is Identically Conserved]
\label{lem:bound}
$\Lambda_{b}\equiv\partial\rho_{b}/\partial t+\dive\vJ_{b}=0$ for any smooth $\vP,\vM$. The identity also holds distributionally for piecewise-smooth media with fixed interfaces.
\end{lemma}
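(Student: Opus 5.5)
The plan is to prove the smooth case by direct differentiation, and then transfer it to piecewise-smooth media by working in the sense of distributions rather than matching interface terms by hand.

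\textbf{Smooth case.} For smooth $\vP,\vM$ I would compute
\[
\Lambda_b=\partial_t(-\dive\vP)+\dive(\partial_t\vP)+\dive(\curl\vM).
\]
The first two terms cancel because mixed partial derivatives commute for $C^2$ fields, so $\partial_t\dive\vP=\dive\partial_t\vP$. The last term vanishes by the identity $\dive\curl\vM\equiv0$. This gives $\Lambda_b=0$ pointwise.

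\textbf{Piecewise-smooth case.} Here I would interpret $\vP$ and $\vM$ as locally integrable fields, smooth on each side of fixed interfaces, and define $\rho_b$ and $\vJ_b$ through distributional derivatives. These derivatives automatically contain the singular surface parts: $\sigma_b=-\nhat\cdot[\vP]$, a surface polarization current from the jump of $\partial_t\vP$, and a surface magnetization current $\vK_b=-\nhat\times[\vM]$, with the sign fixed by the orientation of $\nhat$. Distributional derivatives commute, and $\dive\curl=0$ holds for distributions as well. Both facts follow by moving the derivatives onto a test function $\varphi\in C_c^\infty$. For example,
\[
\langle\dive\curl\vM,\varphi\rangle=\langle\vM,\curl\grad\varphi\rangle=0,
\]
and
\[
\langle\partial_t\dive\vP-\dive\partial_t\vP,\varphi\rangle=0
\]
by symmetry of the second derivatives of $\varphi$. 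The smooth calculation therefore carries over verbatim and gives $\Lambda_b=0$ as a distribution, including all interface contributions.

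\textbf{Main obstacle.} The hardest step is justifying that the physical bound sources, meaning volume terms plus the explicitly written surface charges and currents, coincide with these distributional derivatives. For that reason I would also give an explicit cross-check of the interface terms. With fixed interfaces, $\nhat$ is independent of time. The surface charge then satisfies $\partial_t\sigma_b=-\nhat\cdot[\partial_t\vP]$. This is cancelled by the normal jump of the volume polarization current, which supplies the delta-function part of $\dive(\partial_t\vP)$. The magnetization surface current is divergence-free as a distribution, because its surface divergence $\nabla_s\cdot(\nhat\times[\vM])=-\nhat\cdot[\curl\vM]$ cancels against the normal jump of the volume $\curl\vM$.

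The requirement that the interfaces be fixed is essential. If an interface moves, $\partial_t$ acting on $\sigma_b\,\delta_S$ produces additional velocity terms. These must then be included as convective surface currents, and that case lies outside the statement.
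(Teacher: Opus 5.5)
Your proof is correct. The smooth case is exactly the paper's computation; the two routes differ only in how the interface case is handled.

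The paper (the lemma's proof and Appendix~\ref{app:bound}) works with the surface continuity equation $\partial_t\sigma_b+\nabla_s\cdot\vK_b+\nhat\cdot(\vJ_{b,2}-\vJ_{b,1})=0$. It checks by hand that $\partial_t\sigma_b$ is cancelled by the normal jump of $\partial\vP/\partial t$. For the magnetization sheet it simply invokes $\dive(\curl\vM)\equiv0$.

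Your main argument is different. You observe that $\Lambda_b=0$ holds in $\mathcal{D}'$ for any locally integrable $\vP,\vM$, because distributional derivatives commute and $\dive\curl=0$ there. Interface terms therefore never need to be listed. The explicit matching only serves to identify the distributional sources with the textbook surface densities. Your magnetization check via $\nabla_s\cdot(\nhat\times[\vM])=-\nhat\cdot[\curl\vM]$ is in fact more explicit than the paper's.

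What your route buys is generality: nothing in the $\mathcal{D}'$ argument uses fixed interfaces. For a moving interface, the distributional $\partial_t\vP$ automatically contains the convective surface term, and $\Lambda_b=0$ still holds. Fixed interfaces are needed only for the identification step addressed in your ``main obstacle'' paragraph. Calling the hypothesis ``essential'' for conservation itself therefore overstates it. The paper's route buys a transparent picture of which physical terms cancel.

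There are two minor slips.

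First, take $[\cdot]$ as the jump in the direction of $\nhat$, the convention that makes your $\sigma_b=-\nhat\cdot[\vP]$ correct. Then the singular part of $\curl\vM$ is $\nhat\times[\vM]\,\delta_S$, so $\vK_b=+\nhat\times[\vM]=(\vM_1-\vM_2)\times\nhat$, as in the paper. Your stated $-\nhat\times[\vM]$ has the wrong sign. Your own cross-check implicitly uses the plus sign, since only then does $\nabla_s\cdot\vK_b$ cancel $\nhat\cdot[\curl\vM]$.

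Second, for fixed interfaces $\partial_t\vP$ has no delta-function part, so there is no ``surface polarization current.'' The jump of $\partial_t\vP$ enters only through the singular part $\nhat\cdot[\partial_t\vP]\,\delta_S$ of $\dive(\partial_t\vP)$, as you correctly say later.
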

\begin{proof}
Direct substitution gives $\Lambda_{b}=-\dive(\partial\vP/\partial t)+\dive(\partial\vP/\partial t)+\dive(\curl\vM)=0$. The first two terms cancel because mixed partial derivatives commute, and $\dive(\curl\vM)\equiv0$. For piecewise-smooth media with fixed interfaces, the same identity holds distributionally when the standard bound surface densities are included. Moving interfaces add convective surface terms and are outside the present scope. The surface term $\partial_{t}\sigma_{b}$, with $\sigma_{b}=\nhat\cdot(\vP_{1}-\vP_{2})$ and $\nhat$ directed from region 1 to region 2, is canceled by the normal jump of the polarization current $\partial\vP/\partial t$. At a medium-vacuum interface, $\vP_{2}=\zero$. A discontinuous polarization, including the electret sphere below, therefore still satisfies the identity.
\end{proof}

\begin{corollary}[No-Go Result for Bound Sources]
\label{cor:nogo}
Bound charge and current do not source the scalar channel. The flux $\langle P_{L}\rangle$ depends only on $\Lambda=\Lambda_{f}=\partial_{t}\rho_{f}+\dive\vJ_{f}$. If the free sources obey local continuity, then $\Lambda_{f}=0$. This includes a body with no free sources. With zero scalar initial data, such a body has $C\equiv0$ and follows Maxwell's equations. The fields $\vD$ and $\vH$ therefore gain no additional radiative degrees of freedom.
\end{corollary}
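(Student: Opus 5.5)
The plan is a short chain: linearity of the anomaly, then Lemma~\ref{lem:bound}, then the source dependence of $C$ already established in Sections~\ref{sec:eed}--\ref{sec:multipole}.

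\textbf{Step 1: reduce the anomaly to the free sources.} The map $(\rho,\vJ)\mapsto\Lambda=\partial_t\rho+\dive\vJ$ is linear. Splitting $\rho=\rho_f+\rho_b$ and $\vJ=\vJ_f+\vJ_b$ therefore gives $\Lambda=\Lambda_f+\Lambda_b$. Lemma~\ref{lem:bound} gives $\Lambda_b=0$, so $\Lambda=\Lambda_f$. The lemma applies distributionally, so this holds including interface surface terms for fixed interfaces. Passing to phasors at frequency $\omega$ preserves the identity, giving $\Lambda_\omega=\Lambda_{f,\omega}$.

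\textbf{Step 2: the scalar flux sees only $\Lambda_f$.} Two earlier results carry this step.
\begin{itemize}
\item By Proposition~\ref{prop:power}, $\langle P_L\rangle$ depends on the source only through the moments $q_{\ell m,\omega}$ of $\Lambda_\omega$.
\item Equivalently, by \eqref{eq:farC} it depends only on $\Lambda_{\mathrm{on},\omega}$.
\end{itemize}
Hence $\langle P_L\rangle$ is a functional of $\Lambda_f$ alone, and $\rho_b,\vJ_b$ drop out. For the finite-duration case the same holds through \eqref{eq:WL}.

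\textbf{Step 3: a body whose free sources obey continuity follows Maxwell's equations.}
\begin{itemize}
\item If the free sources obey local continuity, then $\Lambda_f=0$. This includes the trivial case $\rho_f=0$, $\vJ_f=\zero$.
\item Then $\Lambda=0$, and \eqref{eq:scalarbox} is the homogeneous wave equation for $C$.
\item With zero scalar initial data, the unique retarded solution is $C\equiv0$. This uses the well-posedness statement for the retarded problem.
\item Substituting $C\equiv0$ into \eqref{eq:eedfields} recovers Maxwell's equations for the total sources.
\end{itemize}

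\textbf{Step 4: no new role for $\vD$ and $\vH$.} Define $\vD=\eps\vE+\vP$ and $\vH=\vB/\mu_0-\vM$ as usual. Maxwell's equations for the total sources then rewrite in the standard way as $\dive\vD=\rho_f$ and $\curl\vH-\partial_t\vD=\vJ_f$. These auxiliary fields are fixed algebraically by $\vE$, $\vB$, $\vP$ and $\vM$, so they are not independent dynamical fields. The radiation content is therefore exactly the transverse channel of Theorem~\ref{thm:decomp}.

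\textbf{Main obstacle.} The only delicate point is the distributional version at interfaces. Everything else is linearity plus uniqueness. I would rely on the fixed-interface statement of Lemma~\ref{lem:bound} and explicitly exclude moving interfaces, as the lemma does. I would also note that if the moving-interface surface terms were not conserved, they would appear in $\Lambda$ and could radiate.
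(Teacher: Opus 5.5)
Your proposal is correct and follows the paper's route: Lemma~\ref{lem:bound} gives $\Lambda_b=0$ and hence $\Lambda=\Lambda_f$, the scalar flux depends only on the on-shell moments of $\Lambda$, and $\Lambda_f=0$ with zero scalar initial data forces $C\equiv0$ and Maxwell's equations. The one difference is in your Step~4: the paper also rewrites the extended equations in terms of $\vD$ and $\vH$, Eq.~\eqref{eq:macro}, and derives $(\nabla^{2}-c^{-2}\partial_{t}^{2})C=-\mu_{0}(\partial_{t}\rho_{f}+\dive\vJ_{f})$ directly, which shows the bound terms drop out even when $C\neq0$, whereas you treat only the $C\equiv0$ reduction; that is enough for the statement as written.
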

\medskip

\noindent A time-dependent polarization current $\partial\vP/\partial t$ contributes $\dive(\partial\vP/\partial t)$ to $\vJ_{b}$, but $\partial\rho_{b}/\partial t$ cancels it exactly. Only an anomalous free-source sector can drive the scalar wave. That sector may contain anomalous free charge, anomalous current, or both.
\\

The macroscopic equations give the same result directly. With $\vD=\eps\vE+\vP$ and $\vH=\vB/\mu_{0}-\vM$, Eq.~\eqref{eq:eedfields} becomes
\begin{equation}
\dive\vD=\rho_{f}-\eps\,\frac{\partial C}{\partial t},\qquad
\curl\vH-\frac{\partial\vD}{\partial t}=\vJ_{f}+\frac{1}{\mu_{0}}\grad C.
\label{eq:macro}
\end{equation}
Taking $\partial_{t}(\dive\vD)+\dive(\curl\vH-\partial_{t}\vD)$ gives $(\nabla^{2}-c^{-2}\partial_{t}^{2})C=-\mu_{0}(\partial_{t}\rho_{f}+\dive\vJ_{f})$. The bound terms cancel, so only the free sector sources the scalar field. Here ``auxiliary'' means that $\vD$ and $\vH$ introduce no new radiative degrees of freedom. The constitutive fields $\vP$ and $\vM$ may still have their own material dynamics.
\\

\begin{corollary}[Electric-Magnetic Asymmetry]
\label{cor:asymmetry}
A compensated magnetized body with no local continuity anomaly cannot radiate in the scalar channel. The anomaly in \eqref{eq:scalarbox} contains the charge rate $\partial\rho/\partial t$ and the longitudinal part of the current. A body with $\rho=0$ and only transverse curl currents has $\Lambda\equiv0$, so $\langle P_{L}\rangle=0$. A magnetic counterpart would require magnetic charge or a dual scalar sector. Neither is present in the EED model used here.
\end{corollary}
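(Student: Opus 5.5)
The plan is to show that such a body has $\Lambda\equiv0$ identically, and then read off $\langle P_L\rangle=0$ from Proposition~\ref{prop:power}. By Eq.~\eqref{eq:Pmaster}, $\langle P_L\rangle$ is a sum of $|q_{\ell m,\omega}|^2$. Each moment $q_{\ell m,\omega}$ is linear in $\Lambda_\omega$. So it suffices to prove that the phasor anomaly $\Lambda_\omega$ vanishes at every radiating frequency. Equivalently, by \eqref{eq:farC}, it suffices that $\Lambda_{\mathrm{on},\omega}(\rhat)=0$ for all $\rhat$.

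For the magnetized body I will take the source to be $\rho=0$ and $\vJ=\curl\vM$, possibly plus any other purely solenoidal current $\vJ=\curl\mathbf{W}$. The first step is then the identity $\Lambda=\partial_t 0+\dive(\curl\vM)=0$, which is exactly the $\vM$ part of Lemma~\ref{lem:bound}. For a magnetized body with a sharp boundary I will rely on the distributional version of that lemma. The surface magnetization current $\vK_b=\vM\times\nhat$ combines with the volume term $\curl\vM$ into the distributional curl of $\vM\chi_V$. Its divergence is zero as a distribution because $\dive\curl$ annihilates any distribution. I expect this interface bookkeeping to be the only delicate point. The fix is to write everything as the distributional curl of $\vM\chi_V$ rather than tracking the volume and surface pieces separately.

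Next I apply Corollary~\ref{cor:nogo}. With zero scalar initial data and $\Lambda=0$, the retarded solution of \eqref{eq:scalarbox} gives $C\equiv0$. Hence $\vE_L=cC\rhat=\zero$ in Theorem~\ref{thm:decomp}, and all scalar flux vanishes. Any transverse radiation from $\Pperp\calJ_\omega$ stays in the Maxwell channel. The same conclusion can be checked on shell through \eqref{eq:on-shell-Lambda}. With $\calR_\omega=0$ and $\calJ_\omega=ik\int e^{-ik\rhat\cdot\vr'}\,\rhat\times\vM_\omega\,d^3r'$ (after integrating by parts), one has $\rhat\cdot\calJ_\omega=0$, so $\Lambda_{\mathrm{on},\omega}=0$.

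For the remaining interpretive claim, I will show that the only way to source the scalar channel is through $\partial_t\rho$ or through a longitudinal, non-solenoidal part of $\vJ$. On the magnetic side, a dual analogue would need a term $\partial_t\rho_m+\dive\vJ_m$ entering a dual scalar equation. In the EED field equations \eqref{eq:eedfields}, $\dive\vB=0$ holds identically and no magnetic source or dual scalar appears. So no magnetic counterpart of \eqref{eq:scalarbox} exists within the model, and the electric–magnetic asymmetry follows.
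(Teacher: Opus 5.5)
Your proposal is correct and follows the same route as the paper. With $\rho=0$ and a purely solenoidal current $\vJ=\curl\vM$ (interface sheet included, as in Lemma~\ref{lem:bound}), you get $\Lambda=\partial_t\rho+\dive\vJ\equiv0$, hence $C\equiv0$ and $\langle P_L\rangle=0$ by \eqref{eq:Pmaster}; and since $\dive\vB=0$ with no magnetic charge, there is no dual analogue of \eqref{eq:scalarbox}.

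Your distributional-curl treatment of $\vM\chi_V$ and the on-shell check $\rhat\cdot\calJ_\omega=0$ are useful additions. One wording fix: $\Lambda_{\mathrm{on},\omega}\equiv0$ is equivalent to $\langle P_L\rangle=0$, not to $\Lambda_\omega\equiv0$, so ``equivalently'' should read ``alternatively.''
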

\medskip

This asymmetry follows from the equations, not from a particular drive. EED adds one scalar to the electric-longitudinal sector, and the charge anomaly sources it. Magnetization currents are solenoidal and provide no anomaly. The new channel therefore lies in a charge-nonconserving free-source sector rather than in bound matter or magnetization. This leads to the source construction in the next section.

\begin{corollary}[Vanishing Adopted-Current Overlap for a Conserved Test Current]
\label{cor:detector}
Let a pure incident scalar-longitudinal phasor $(\vE_{L,\mathrm{inc},\omega},C_{\mathrm{inc},\omega})$ solve the source-free equations before a localized detector is added. Neglect detector self-fields and scattering. For a conserved detector pair $(\rho_{d,\omega},\vJ_{d,\omega})$ with $\Lambda_{d,\omega}=0$,
\begin{equation}
\langle P_d\rangle=\frac12\mathrm{Re}\!\int
\left(\vE_{L,\mathrm{inc},\omega}\cdot\vJ_{d,\omega}^{*}
-c^2C_{\mathrm{inc},\omega}\rho_{d,\omega}^{*}\right)d^3r=0.
\label{eq:detectorzero}
\end{equation}
For a general detector,
\begin{equation}
\langle P_d\rangle=\frac12\mathrm{Re}\!\left[
\frac{ic^2}{\omega}\int C_{\mathrm{inc},\omega}\Lambda_{d,\omega}^{*}\,d^3r
\right].
\label{eq:detector-general}
\end{equation}
\end{corollary}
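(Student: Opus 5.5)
The plan is to write the incident pure scalar-longitudinal field through potentials and integrate by parts against the detector sources. First, fix the power expression. The extended Poynting balance gives the work done by the fields on the sources per unit volume as $\vE\cdot\vJ-c^{2}\rho C$, the negative of the source term in $\partial u/\partial t+\dive\vS$. Its cycle average for phasors is $\tfrac12\mathrm{Re}\int(\vE_{\omega}\cdot\vJ_{d,\omega}^{*}-c^{2}C_{\omega}\rho_{d,\omega}^{*})\,d^{3}r$. We evaluate this with the incident fields only, since self-fields and scattering are neglected. This gives the first equality in \eqref{eq:detectorzero}.

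Second, express the incident pure SLW phasor in terms of $C_{\mathrm{inc},\omega}$ alone. Because $\vB_{\mathrm{inc}}=\zero$ and the fields are source-free, the extended Amp\`ere law \eqref{eq:eedfields} gives $-c^{-2}\partial_{t}\vE=\grad C$. In phasor form this is $(i\omega/c^{2})\vE_{L,\mathrm{inc},\omega}=\grad C_{\mathrm{inc},\omega}$, that is,
\begin{equation*}
\vE_{L,\mathrm{inc},\omega}=-\frac{ic^{2}}{\omega}\grad C_{\mathrm{inc},\omega}.
\end{equation*}
As a check, for the plane wave of \eqref{eq:slw} this reproduces $\vE=cC\zhat$. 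It is consistent with the extended Gauss law, which here reads $\dive\vE=i\omega C$, since $(\nabla^{2}+k^{2})C=0$. This is the main step, and it relies on the incident field being purely scalar-longitudinal with no transverse admixture.

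Third, substitute and integrate by parts. The current term becomes
\begin{equation*}
-\frac{ic^{2}}{\omega}\int\grad C_{\mathrm{inc},\omega}\cdot\vJ_{d,\omega}^{*}\,d^{3}r=\frac{ic^{2}}{\omega}\int C_{\mathrm{inc},\omega}\,\dive\vJ_{d,\omega}^{*}\,d^{3}r.
\end{equation*}
The surface term vanishes because the detector is localized. The charge term can be rewritten as $-c^{2}C\rho^{*}=(ic^{2}/\omega)\,C\,(i\omega\rho^{*})$. Here $i\omega\rho_{d,\omega}^{*}=(-i\omega\rho_{d,\omega})^{*}$ is the conjugate phasor of $\partial_{t}\rho_{d}$. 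Adding the two terms gives $(ic^{2}/\omega)\int C_{\mathrm{inc},\omega}\Lambda_{d,\omega}^{*}\,d^{3}r$ with $\Lambda_{d,\omega}=-i\omega\rho_{d,\omega}+\dive\vJ_{d,\omega}$. This is \eqref{eq:detector-general}.

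Setting $\Lambda_{d,\omega}=0$ then gives the vanishing result \eqref{eq:detectorzero}. The only care points are the conjugation and sign bookkeeping under the $e^{-i\omega t}$ convention, and the justification of the boundary term for compactly supported or distributional detector sources, which the paper already treats as formal distributions.
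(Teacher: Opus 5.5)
Your proposal is correct and follows the paper's proof. The extended Amp\`ere law with $\vB=\zero$ gives $\vE_{L,\omega}=(c^{2}/i\omega)\grad C_{\omega}$, integration by parts moves the gradient onto $\vJ_{d,\omega}^{*}$, and $\dive\vJ_{d,\omega}=\Lambda_{d,\omega}+i\omega\rho_{d,\omega}$ absorbs the $-c^{2}C\rho^{*}$ term to leave $(ic^{2}/\omega)\int C_{\mathrm{inc},\omega}\Lambda_{d,\omega}^{*}\,d^{3}r$; your derivation of the power integrand from the extended Poynting balance and your Gauss-law check only make explicit what the paper assumes.
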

\begin{proof}
In a source-free region with $\vB_{\omega}=\zero$, the extended Amp\`ere law gives
$\vE_{L,\omega}=(c^2/i\omega)\grad C_{\omega}$. Integration by parts yields
\[
\int\vE_{L,\omega}\cdot\vJ_{d,\omega}^{*}d^3r
=-\frac{c^2}{i\omega}\int C_{\omega}(\dive\vJ_{d,\omega})^{*}d^3r.
\]
Using $\dive\vJ_{d,\omega}=\Lambda_{d,\omega}+i\omega\rho_{d,\omega}$ gives \eqref{eq:detector-general}. Setting $\Lambda_{d,\omega}=0$ gives \eqref{eq:detectorzero}.
\end{proof}

This is a test-current result, not a complete receiver theory. A conserved prescribed current extracts no net cycle-averaged energy under the adopted balance, although an assumed Lorentz force can still produce instantaneous mechanical response. Active, nonlinear, parametric, and back-reacting detectors require separate models. Proposed anomalous receivers and alternative matter couplings are discussed in Refs.~\cite{HivelyGiakos2012,ReedHively2020,MinottiModanese2024,MinottiModanese2026,MinottiModanese2026b}. For ordinary charge-conserving media, Ref.~\cite{MinottiModanese2023b} likewise finds that the scalar field does not propagate or dissipate energy, although a pure potential wave may occur.

\section{A Compensated Polarized Source With a Globally Neutral Anomalous Shell}
\label{sec:electret}

A compensated carrier can remove the transverse Maxwell background, while a prescribed anomalous surface layer excites only the scalar channel. The anomalous sector is not a Maxwell-admissible source and is used only as an EED diagnostic. Figure~\ref{fig:electret} summarizes the construction.

\subsection{Maxwell-Silent Carrier}

Consider a sphere of radius $R$ with uniform polarization $\vP(t)=P_{0}\cos(\omega t)\zhat$ and no magnetization. This is an externally driven harmonic idealization of an electret, whose physical polarization would normally be nearly static. It is a prescribed kinematic source rather than a self-consistent material model. The bound sources are $\rho_{b}=0$ in the bulk, $\sigma_{b}=P_{0}\cos\omega t\cos\theta$ on the surface, and the bulk polarization current $\vJ_{b}=\partial\vP/\partial t=-\omega P_{0}\sin\omega t\,\zhat$. A neutralizing surface charge alone does not make the source silent. It cancels the electric dipole but leaves $\vJ_{b}$, which still has an ordinary transverse radiation amplitude. To cancel the Maxwell channel completely, add the free compensating layer and volume current
\begin{equation}
\sigma_{f}^{\rm car}=-\sigma_{b},\qquad
\vJ_{f,\rm car}^{\,\mathrm{vol}}=-\pdt{\vP}=+\omega P_{0}\sin\omega t\,\zhat.
\label{eq:carriercomp}
\end{equation}
As distributions, the complete carrier has $\rho_{\rm car}=0$ and $\vJ_{\rm car}=\zero$. Its retarded source-generated potentials therefore vanish up to homogeneous solutions. Thus $\vE=\vB=\zero$ and $\langle P_{T}\rangle=0$ exactly. The free and bound carrier sectors are separately conservative after the standard surface-continuity term is included (Appendix~\ref{app:bound}). Hence $\Lambda_{\rm car}=0$, and the carrier is also silent in EED.

\subsection{Anomalous Scalar-Source Sector}

Opening the scalar channel requires a source that violates local charge conservation. A dipolar surface anomaly is placed on the compensating layer. It is the lowest multipole with zero net anomalous charge.
\begin{equation}
\lambda_{s}(\theta,t)=\lambda_{a}\cos(\omega t)\cos\theta,
\qquad
\Lambda_{a}(\vr,t)=\lambda_{s}(\theta,t)\,\delta(r-R).
\label{eq:surfaceanomaly}
\end{equation}
This source has no monopole and has a nonzero dipole. Here $\delta(r-R)$ is the one-dimensional radial delta function used with $d^{3}r=r^{2}dr\,d\Omega$. The integral $\int\Lambda_{a}\,d^{3}r$ vanishes at every instant, and the associated anomalous surface charge also integrates to zero. The model therefore violates local continuity while conserving total charge. The dipole is its leading radiating moment. The parameter $\lambda_{a}$ is the real time-domain amplitude. Under the convention of Sec.~\ref{sec:decomp}, the phasor amplitude is $\Lambda_{a,\omega}=\lambda_{a}\cos\theta\,\delta(r-R)$. One prescribed realization in the EED field equations is the anomalous surface-charge modulation
\begin{equation}
\sigma_{a}(\theta,t)=\frac{\lambda_{a}}{\omega}\sin(\omega t)\cos\theta,
\qquad \vJ_{a}=\zero.
\label{eq:minrealization}
\end{equation}
It satisfies $\partial\sigma_{a}/\partial t=\lambda_{s}$, so the surface continuity equation is violated by construction. Because the anomalous sector has no current, its on-shell transverse amplitude $\Pperp(\rhat)\calJ_{a,\omega}(\rhat)$ vanishes. It produces no ordinary transverse radiation and no scalar-channel magnetic field. Its nonzero $\Lambda_{a}$ still drives $C$ through \eqref{eq:scalarbox}. After this sector is added, the total source is neither the zero Maxwell source nor a Maxwell-consistent source. The transverse radiation amplitude remains zero, leaving the scalar channel isolated from an electromagnetic background.

\subsection{Radiated Power}

The anomaly dipole is
\begin{equation}
\vp_{\Lambda}=\oint_{r=R}\vr'\,\lambda_{s}\,dA=\tfrac{4\pi}{3}R^{3}\lambda_{a}\cos(\omega t)\,\zhat,
\qquad p_{\Lambda,0}=\tfrac{4\pi}{3}R^{3}\lambda_{a}.
\end{equation}
Substitution into the dipole formula \eqref{eq:Pdip} gives
\begin{equation}
\boxed{\;\langle P\rangle_{\mathrm{SLW}}\simeq\frac{\mu_{0}\,\omega^{2}\,p_{\Lambda,0}^{2}}{24\pi c}
=\frac{2\pi\,\mu_{0}\,\omega^{2}\,R^{6}\,\lambda_{a}^{2}}{27\,c}\;.}
\label{eq:Pelectret}
\end{equation}
The power scales as $\lambda_a^2$ and vanishes with the anomaly. A purely magnetization-based source has $\Lambda=0$ and does not excite this channel.

\subsection{Exact Shell Power}

Equation~\eqref{eq:Pelectret} is the long-wavelength result. The ideal shell anomaly remains a single multipole at every $kR$, so \eqref{eq:Pmaster} gives an exact expression. For $\Lambda_{a,\omega}=\lambda_{a}\cos\theta\,\delta(r-R)$, only the $\ell=1,m=0$ moment remains. Appendix~\ref{app:electret} gives $q_{10,\omega}=\lambda_{a}R^{2}j_{1}(kR)\sqrt{4\pi/3}$. Therefore
\begin{align}
\langle P\rangle_{\mathrm{shell}}
&=\frac{c\mu_{0}}{2}|q_{10,\omega}|^{2}
=\frac{2\pi c\mu_{0}}{3}\lambda_a^2R^4j_1^2(kR),
\label{eq:Pshell-exact}\\
\langle P\rangle_{\mathrm{shell}}
&\simeq\frac{2\pi\mu_{0}\omega^2R^6\lambda_a^2}{27c}
\left[1-\frac{(kR)^2}{5}+O((kR)^4)\right],\qquad kR\ll1.
\label{eq:Pshell-small}
\end{align}
Here $j_1(x)=x/3-x^3/30+O(x^5)$ was used in the last step. The leading term is \eqref{eq:Pelectret}. The zeros of $j_{1}(kR)$ make the ideal shell radiation-silent. These zeros rely on an infinitesimally thin, uniformly phased layer. For a finite-thickness anomaly, $j_{1}(kR)$ is replaced by a radial average through the layer. Thickness, phase variation, material response, and drive nonuniformity shift the nulls and generally make the power nonzero at the thin-shell zeros, although the modified form factor may have other zeros. The nulls are therefore not universal. The dipole result is the first term of the exact shell power.

\subsection[Comparison with the Tensor Treatment of Minotti and Modanese]{Comparison with the Tensor Treatment of Ref.~\cite{MinottiModanese2021}}

The shell in Sec.~\ref{sec:electret} has $\vJ_{a}=\zero$ and a time-dependent, globally neutral dipolar surface charge. It belongs to the class of currentless oscillating dipoles. Reference~\cite{MinottiModanese2021} studies this class with the symmetric energy-momentum tensor derived there and obtains a different result. In the long-wavelength dipole limit, its cycle-averaged tensor flux is negative, $\langle W_{\mathrm{tensor}}\rangle=-\frac{\mu_{0}}{12\pi c}\langle|d\vp_{\Lambda}/dt|^{2}\rangle$ in the notation used here. The anomaly dipole $\vp_{\Lambda}=\int\vr\,\Lambda\,d^{3}r$ is written there as $\int\mathbf{x}\,I\,d^{3}x$. The tensor flux points inward. Since $\vJ=\zero$, the $\vJ\cdot\vE$ exchange channel is also absent. That treatment finds no path for locally transferred field energy to return or dissipate. The authors conclude that their particular two-charge currentless dipole is not physically possible, while noting that this does not rule out every similar source.
\\

The difference is explicit. At leading dipole order, the two results are
\[
\langle P\rangle_{\text{dipole}}=+\frac{\mu_{0}}{12\pi c}\Big\langle\Big|\frac{d\vp_{\Lambda}}{dt}\Big|^{2}\Big\rangle,
\qquad
\langle W_{\mathrm{tensor}}\rangle=-\frac{\mu_{0}}{12\pi c}\Big\langle\Big|\frac{d\vp_{\Lambda}}{dt}\Big|^{2}\Big\rangle.
\]
The two expressions have the same magnitude and opposite signs for the same currentless anomaly dipole. The field-only balance gives outward flow, while the tensor balance gives inward flow. Reference~\cite{MinottiModanese2021} establishes the tensor result at leading dipole order. The corresponding all-orders tensor flux for the shell is not calculated here. The expressions come from different currents, so there is no mathematical contradiction. They cannot both represent the same physical radiated power, and this paper does not decide which one does. The shell is therefore a prescribed-source illustration of the multipole formulas, not proof that such a device would radiate. The tensor comparison also leaves the physical admissibility of a currentless anomalous source unresolved. The field decomposition and no-go results in Secs.~\ref{sec:decomp} through~\ref{sec:corollaries} do not depend on that question.

\subsection{Source Exchange in the Adopted Balance}

In the field-only balance \eqref{eq:poynting}, integrated over a region enclosing the shell and averaged over one cycle, the volume integral of the source term $c^{2}\rho C$ equals the outgoing scalar power, since $\vJ_{a}=\zero$ and the cycle-averaged stored energy is stationary. Here $\rho$ is the anomalous surface charge associated with $\sigma_{a}$. This equality is an integrated, cycle-averaged consequence of \eqref{eq:poynting}. It is not a pointwise identity and does not determine a microscopic source energy budget. For the shell,
$\rho_{a,\omega}(\vr)=\sigma_{a,\omega}(\theta)\delta(r-R)$ with
$\sigma_{a,\omega}=(i\lambda_a/\omega)\cos\theta$, and the partial-wave Green function gives $C_{\omega}(R,\theta)=i\mu_{0}k\lambda_{a}R^{2}j_{1}(kR)\,h_{1}^{(1)}(kR)\cos\theta$. The source integral is
\[
\tfrac12\,\mathrm{Re}\!\int c^{2}\rho_{a,\omega}C_{\omega}^{*}\,d^{3}r
=\tfrac12\,c^{2}\mu_{0}\frac{k}{\omega}\lambda_{a}^{2}R^{4}j_{1}^{2}(kR)\cdot\frac{4\pi}{3}
=\frac{2\pi c\mu_{0}}{3}\lambda_{a}^{2}R^{4}j_{1}^{2}(kR),
\]
where $\mathrm{Re}[\,j_{1}h_{1}^{(1)*}]=j_{1}^{2}$. The result is exactly \eqref{eq:Pshell-exact}. The surface value is well defined because $C$ is continuous across the layer. Its radial derivative has the jump $[\partial_{r}C_{\omega}]_{r=R}\equiv(\partial_{r}C_{\omega})_{R^{+}}-(\partial_{r}C_{\omega})_{R^{-}}=-\mu_{0}\lambda_{s,\omega}(\theta)$, with $\lambda_{s,\omega}(\theta)=\lambda_{a}\cos\theta$, obtained by integrating the scalar Helmholtz equation across the shell.

\begin{figure}[t]
\centering
\includegraphics[width=0.86\textwidth]{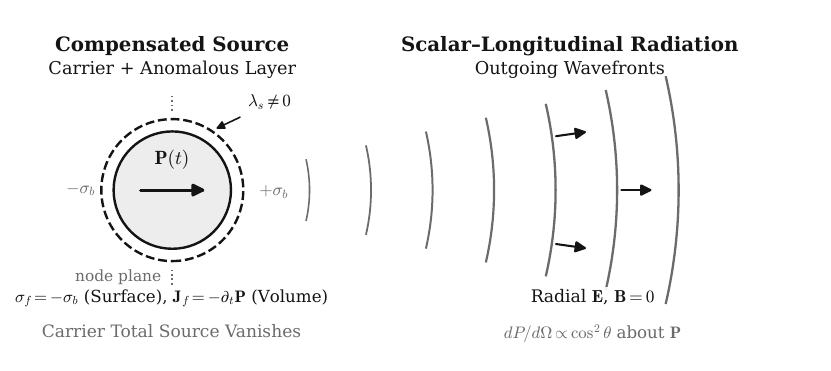}
\caption{Compensated-source bookkeeping. The carrier has $\rho_{\rm car}=0$ and $\vJ_{\rm car}=\zero$ as distributions. The anomalous layer has $\lambda_s\neq0$ and $\Pperp(\rhat)\calJ_{a,\omega}(\rhat)=0$, so it opens only the scalar-longitudinal channel. Dotted guides mark the equatorial node. The arcs are schematic wavefronts, not the angular pattern.}
\label{fig:electret}
\end{figure}

\FloatBarrier
\section{Comparison with Classical Electrodynamics and Illustrative Field Scaling}
\label{sec:compare}

Table~\ref{tab:compare} summarizes the contrast with Maxwell theory. The scalar sector is inactive in Maxwell theory and in EED with conserved sources. It can be activated only when a free source violates local charge conservation, and it radiates only when at least one on-shell anomaly moment $q_{\ell m,\omega}$ is nonzero.

\begin{table}[t]
\caption{Radiation-zone content in three cases. For zero scalar initial data, the scalar channel is absent unless the free charge-current pair violates local continuity, $\Lambda=\partial\rho/\partial t+\dive\vJ\neq0$. The transverse Maxwell sector is unchanged by $C$. In the last column, $\Lambda\neq0$ is necessary but not sufficient because at least one on-shell moment $q_{\ell m,\omega}$ must also be nonzero}
\label{tab:compare}
\centering
\footnotesize
\begin{tabularx}{\textwidth}{@{}p{0.20\textwidth}YYY@{}}
\toprule
& \textbf{Classical} & \textbf{EED, $\Lambda=0$} & \textbf{EED, $\Lambda\neq0$}\\
\midrule
Scalar Field $C$ & Not Dynamical (gauge only) & $\equiv0$ (zero initial data) & Generically $\neq0$\\
Radiative $E_{L}$ & None & None & $\neq0$ if radiative $q_{\ell m,\omega}\neq0$\\
Magnetic Field $\vB$ & Transverse Radiation & Transverse Radiation & $\vB_{L}=\zero$, $\vB_{T}$ if $\Pperp(\rhat)\calJ_{\omega}(\rhat)\neq0$\\
Transverse Radiation & From $\Pperp(\rhat)\calJ_{\omega}(\rhat)$ & As in Maxwell & From $\Pperp(\rhat)\calJ_{\omega}(\rhat)$, unchanged\\
Scalar Radiation & Absent & None & $\tfrac12 c\mu_{0}\sum|q_{\ell m,\omega}|^{2}$ if $q_{\ell m,\omega}\neq0$\\
Source Condition & $\Lambda=0$ & $\Lambda=0$ & $\Lambda\neq0$ with $q_{\ell m,\omega}\neq0$\\
\botrule
\end{tabularx}
\end{table}

\subsection{Wave Structure}

Figure~\ref{fig:structure} shows the field geometry. Maxwell radiation is transverse, so $\vE$ and $\vB$ are perpendicular to the propagation direction. In the EED branch, the electric field is longitudinal, $\vB=\zero$, and $C$ travels in phase with $\vE_{L}$. The missing magnetic field is a defining property of the mode, not a small correction.

\subsection{Angular Pattern}

Figure~\ref{fig:patterns} compares dipole patterns. A Maxwell electric dipole comes from a conserved charge-current pair and has a $\sin^{2}\theta$ pattern, strongest across the source axis. The scalar dipole comes from $\vp_{\Lambda}$ and has a $\cos^{2}\theta$ pattern, strongest along the axis. The polar and Cartesian plots show the same complementary behavior.

\subsection{Multipole Hierarchy and Power}

Figure~\ref{fig:power} shows the long-wavelength scaling $\langle P_{\ell}\rangle\propto x^{2\ell}$ for scalar multipoles of comparable reduced strength, with $x=kR$. Each higher multipole is suppressed by another factor of $x^{2}$, so the lowest nonzero multipole dominates. The compensated electret anomaly has no monopole and begins at dipole order. Figure~\ref{fig:shell} gives the exact shell power from \eqref{eq:Pshell-exact}, including the finite-$kR$ behavior and the form-factor zeros of the ideal shell. The compensated carrier has exactly zero Maxwell radiation.

\begin{figure}[t]
\centering
\includegraphics[width=0.94\textwidth]{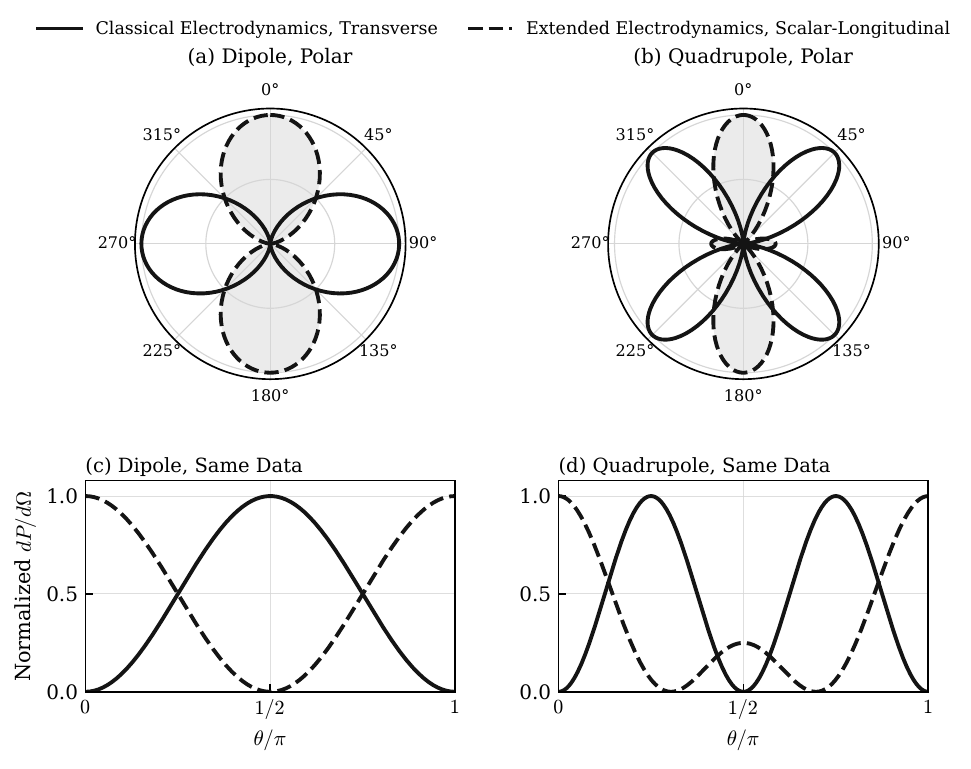}
\caption{Normalized angular patterns for transverse Maxwell radiation and EED scalar-longitudinal radiation. The scalar dipole is proportional to $\cos^2\theta$, whereas the Maxwell electric dipole is proportional to $\sin^2\theta$. For the axisymmetric quadrupole, the corresponding curves are proportional to $(3\cos^2\theta-1)^2$ and $\sin^2\theta\cos^2\theta$.}
\label{fig:patterns}
\end{figure}

\begin{figure}[t]
\centering
\includegraphics[width=0.74\textwidth]{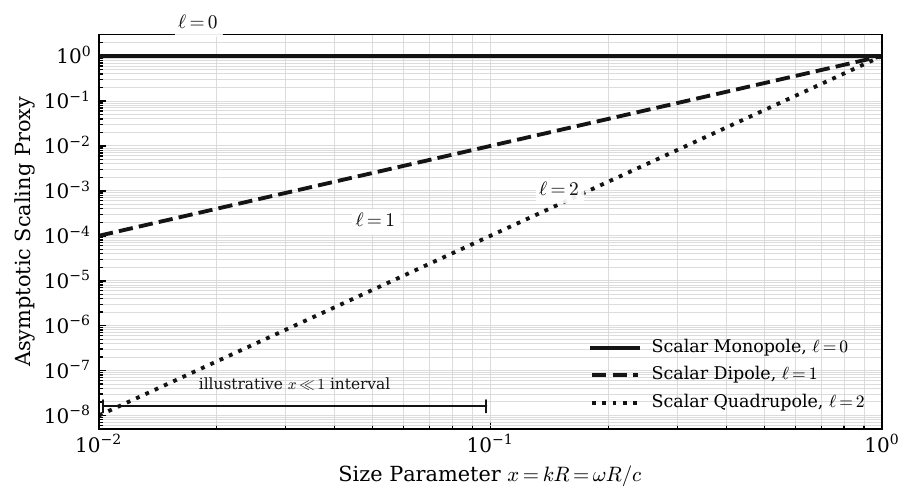}
\caption{Long-wavelength scaling proxies $1$, $x^2$, and $x^4$ for scalar monopole, dipole, and quadrupole terms, with $x=kR$. Each curve is normalized at $x=1$ for display, although the asymptotic approximation requires $x\ll1$. Relative powers also contain $[(2\ell+1)!!]^{-2}$ and the source moment norms.}
\label{fig:power}
\end{figure}

\begin{figure}[t]
\centering
\includegraphics[width=0.62\textwidth]{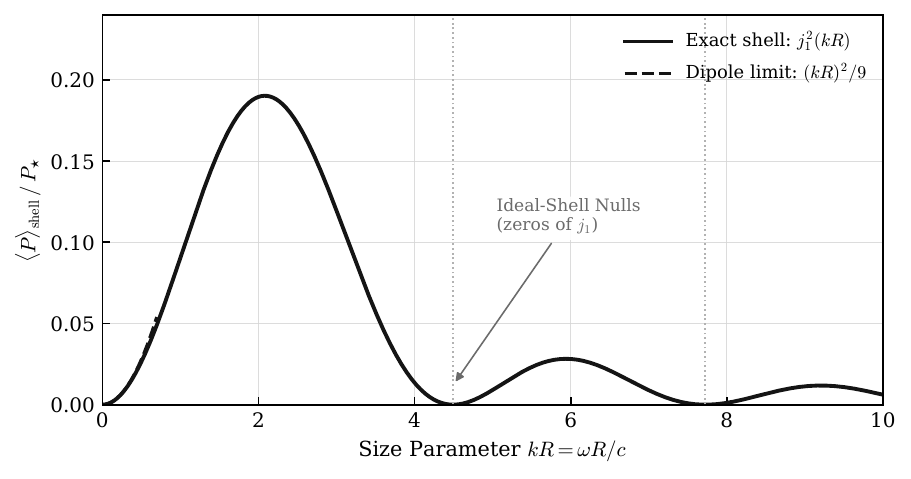}
\caption{Exact shell power from \eqref{eq:Pshell-exact}, normalized by $P_{\star}=2\pi c\mu_{0}\lambda_a^2R^4/3$. The dashed curve is the dipole limit $(kR)^2/9$. The exact form factor vanishes at the zeros of $j_1$. Finite thickness or nonuniform phase shifts these nulls.}
\label{fig:shell}
\end{figure}

\subsection{Observational Context}

For a fixed normalized anomaly profile, $\langle P_{L}\rangle$ scales with the square of the overall anomaly amplitude. Define $\xi\equiv\lambda_{a}/J_{\mathrm{ref}}$ as a convenient dimensionless parameter, not as a universal measure of charge nonconservation. Here $J_{\mathrm{ref}}$ is a reference current-density scale in $\mathrm{A/m^{2}}$. A surface anomaly is of order $\lambda_{a}\sim\xi J_{\mathrm{ref}}$ in $\mathrm{A/m^{2}}$, while a volume anomaly is of order $\Lambda\sim\xi J_{\mathrm{ref}}/R$ in $\mathrm{A/m^{3}}$. In either case, \eqref{eq:Pelectret} gives $\langle P\rangle_{\mathrm{SLW}}\propto\xi^{2}$. A fundamental violation of charge conservation, such as the broken-gauge and brane-world models of Refs.~\cite{Ignatiev1979,Dubovsky2000}, is constrained by searches for specific microscopic processes. Examples include the electron-lifetime bound $\tau\gtrsim6.6\times10^{28}\,$yr for $e\to\nu\gamma$~\cite{Borexino2015}, the Majorana germanium-detector search~\cite{Majorana2024}, and the tests reviewed in Ref.~\cite{Okun1989}. These process-specific limits do not directly constrain the coherent ratio $\xi$ without a microscopic mapping. Reduced transport models can also contain an effective anomaly even when the underlying theory conserves charge~\cite{Modanese2017,MinottiModanese2021,MinottiModanese2021b,MinottiModanese2022}. The manuscript therefore reports only the scaling $\langle P\rangle_{\mathrm{SLW}}\propto\xi^2$.

\subsection{Benchmark Magnitudes}

As an illustration, take $R=1\,\mathrm{cm}$ and $\omega/2\pi=1\,\mathrm{GHz}$. Then $x=kR\approx0.21$, which is in the dipole range. At this value, the exact shell power \eqref{eq:Pshell-exact} is $j_{1}^{2}(kR)/[(kR)^{2}/9]\approx0.991$ times the dipole result, a $0.9\%$ correction. Also take $J_{\mathrm{ref}}=10^{6}\,\mathrm{A/m^{2}}$ as an illustrative normalization, not as a value for a specific conductor, and set $\lambda_{a}=\xi J_{\mathrm{ref}}$. Equation~\eqref{eq:Pelectret} gives $\langle P\rangle_{\mathrm{SLW}}\approx3.9\times10^{4}\,\xi^{2}\,\mathrm{W}$. Holding $\lambda_{a}$ fixed gives the small-$kR$ scaling $\omega^{2}\lambda_{a}^{2}$. Holding the anomalous surface-charge amplitude $\sigma_{a,0}=\lambda_{a}/\omega$ fixed instead gives $\omega^{4}\sigma_{a,0}^{2}$. The on-axis longitudinal field at distance $r$ follows from $\langle S_{r}\rangle(\theta{=}0)=3\langle P\rangle_{\mathrm{SLW}}/4\pi r^{2}$ and $|E_{L,\omega}|=\sqrt{2\mu_{0}c\,\langle S_{r}\rangle}$. It is
\begin{equation}
|E_{L,\omega}|=\frac{1}{r}\sqrt{\frac{3\mu_{0}c\,\langle P\rangle_{\mathrm{SLW}}}{2\pi}}
\;\approx\;2.6\times10^{3}\,\xi\ \ \mathrm{V/m}\qquad(r=1\,\mathrm{m}).
\label{eq:benchmark}
\end{equation}
This is the peak phasor amplitude under the convention of Sec.~\ref{sec:decomp}. The root-mean-square value is smaller by $\sqrt{2}$. In this normalization, $\xi\sim10^{-9}$ gives a coherent source-locked longitudinal field of a few $\mu\mathrm{V/m}$ at one meter and an adopted-current flux of about $40\,\mathrm{fW}$. These values normalize the prescribed source model and do not establish detectability. Corollary~\ref{cor:detector} shows that reception depends on the detector coupling, and $\xi$ remains model dependent.

\subsection{Possible Experimental Signatures}

If such a process exists, three ideal features provide consistency checks rather than unique experimental discriminators.
\\
\begin{enumerate}[label=(\roman*),leftmargin=*,itemsep=2pt,topsep=3pt]
\item The pure scalar-longitudinal channel has no magnetic field of its own. An ideal inductive pickup therefore has no direct Faraday signal, although real receivers may couple capacitively or mechanically.
\\
\item A linearly polarized scalar dipole has a $\cos^2\theta$ pattern, strongest on its axis where a co-aligned Maxwell electric dipole has a node.
\item The mode travels at $c$ and satisfies $C_0=E_0/c$. A separate measurement of $C$ requires a specified $C$-sensitive coupling~\cite{HivelyGiakos2012,ReedHively2020}.
\\
\end{enumerate}

Mesoscopic transport has been proposed as a setting for an effective $\Lambda\neq0$ in reduced descriptions~\cite{Modanese2017,MinottiModanese2021,MinottiModanese2021b,MinottiModanese2022}. In particular, approximate non-equilibrium Green-function and density-functional treatments can be compatible with a modified continuity equation $\partial_t\langle\rho\rangle+(1-\gamma)\dive\langle\vJ\rangle=0$~\cite{MinottiModanese2025}. Fractional or nonlocal Schr\"odinger models provide other examples in which the conventional current requires an additional source term~\cite{Wei2016,Modanese2018}. Related fields have been calculated in the secondary-current formulation~\cite{Modanese2017b}. Treating such a reduced-model defect as the physical EED source is an extra phenomenological assumption. The microscopic theory, the reduced description, and the EED source model must therefore be kept distinct. Proposed detector concepts appear in Refs.~\cite{MinottiModanese2023,MinottiModanese2024}.

\subsection{No Maxwell Counterpart}

No charge-conserving free-space Maxwell source produces the far-zone combination of longitudinal $\vE_{L}$, no scalar-channel magnetic field $\vB_{L}=\zero$, and a $\cos^{2}\theta$ dipole pattern. In Maxwell theory, $\partial\rho/\partial t+\dive\vJ=0$ follows from the field equations, so a genuinely nonconserving source has no Maxwell solution. A conserved source, including one whose apparent discontinuity disappears when a missing return path is restored, has $\Lambda=0$, $C\equiv0$, and no scalar emission. Classical longitudinal electric fields can still occur in near zones, plasmas, and guided structures. An experiment would therefore need to operate in the far zone and exclude near-field pickup, plasma or guided-mode leakage, and instrumental artifacts. After those effects were ruled out, the three ideal features would be consistent with an added longitudinal scalar response. Turning a null result into a bound would require a calibrated source-to-multipole transfer function, a detector model, and a statistical background analysis. None is attempted here.

\FloatBarrier
\section{Discussion and Conclusions}\label{sec:discussion}

\subsection{Scope of the Claims}
The results are conditional consequences of Eqs.~\eqref{eq:potwave}--\eqref{eq:scalarbox} for prescribed sources and the stated initial and boundary conditions. They do not imply that charge conservation fails in nature~\cite{Okun1989,Borexino2015,Majorana2024}. The field decomposition and multipole expansion follow from the retarded equations. Power statements additionally use the field-only balance \eqref{eq:uS}. The compensated electret is a diagnostic source, not a model of an ordinary electret.

\subsection{Poynting Functional and the Unresolved Energy Sign}
The scalar-longitudinal sector is related to the component removed from the physical state space in Gupta-Bleuler quantization~\cite{Gupta1950,Bleuler1950}. Equation~\eqref{eq:uS} gives a positive local density for the free wave, but it does not determine the canonical energy. For example, $\Phi=\Phi_0\cos(kz-\omega t)$ and $\vA=\zero$ give the same mode as \eqref{eq:slw}, shifted by one quarter period, while
$\mathcal H_{\mathrm{can}}=-\tfrac12\eps[|\grad\Phi|^2+c^{-2}(\partial_t\Phi)^2]<0$ for this infinite plane-wave representative. An integrated Noether charge requires finite-energy data and specified surface terms. The relation among \eqref{eq:uS}, the canonical current, and a Hilbert stress tensor remains unresolved. The tensor treatment of Ref.~\cite{MinottiModanese2021} can give a different radiated-flux sign. Thus \eqref{eq:Pmaster} is the outgoing flux of the adopted current, not by itself a proof of positive physical energy.

\subsection{Relation to Earlier Work}
The extended framework and its longitudinal solutions trace to Refs.~\cite{Ohmura1956,AharonovBohm1963,vanVlaenderen2001,HivelyGiakos2012}. Woodside studied related uniqueness questions~\cite{Woodside1999,Woodside2000,Woodside2009}. Other developments include Refs.~\cite{JimenezMaroto2009,JimenezMaroto2011,KellerHively2019,ReedHively2020,HivelyLand2021}. Anomalous sources, radiation, material response, and proposed detection have been studied in Refs.~\cite{Modanese2017,MinottiModanese2021,MinottiModanese2021b,MinottiModanese2022,MinottiModanese2023,MinottiModanese2023b,MinottiModanese2024,Modanese2017b}. The present analysis adds a general radiation-zone channel decomposition, the exact scalar multipole formula \eqref{eq:Pmaster}, and the bound-source and detector corollaries derived from them.

\subsection{Validity and Outlook}
Equation~\eqref{eq:Pmaster} is exact for a localized harmonic anomaly, while \eqref{eq:Pell}, \eqref{eq:Pmono}, \eqref{eq:Pdip}, and \eqref{eq:Pelectret} are long-wavelength results. Equation~\eqref{eq:Pshell-exact} is exact for the ideal shell at every $kR$. The carrier cancellation is exact because its total source vanishes as a distribution. Whether a physical system produces an effective $\Lambda\neq0$ is outside this work. Within EED, a radiative anomaly can produce a longitudinal electric field traveling at $c$, with no magnetic field of its own and a $\cos^2\theta$ dipole pattern. Conserved matter with no incoming or homogeneous scalar field produces none of these source-generated fields.

\FloatBarrier
\backmatter
\begin{appendices}

\section{Derivation of the Extended Poynting Theorem}
\label{app:poynting}
Start with \eqref{eq:eedfields}, the homogeneous equations $\curl\vE=-\partial\vB/\partial t$ and $\dive\vB=0$, and the identity $\dive(\vE\times\vB)=\vB\cdot(\curl\vE)-\vE\cdot(\curl\vB)$. Substituting $\curl\vB=\mu_{0}\vJ+\grad C+c^{-2}\partial\vE/\partial t$ and $\curl\vE=-\partial\vB/\partial t$ gives
\begin{equation}
-\frac{1}{\mu_{0}}\dive(\vE\times\vB)=\vE\cdot\vJ+\frac{1}{\mu_{0}}\vE\cdot\grad C +\frac{\partial}{\partial t}\!\left(\frac{\eps}{2}|\vE|^{2}+\frac{1}{2\mu_{0}}|\vB|^{2}\right),
\end{equation}
where $c^{-2}/\mu_{0}=\eps$. Next use $\vE\cdot\grad C=\dive(C\vE)-C\dive\vE$ and the extended Gauss law
$\dive\vE=\rho/\eps-\partial C/\partial t$. Then $\mu_{0}^{-1}\vE\cdot\grad C=\mu_{0}^{-1}\dive(C\vE)-c^{2}\rho C+\partial(C^{2}/2\mu_{0})/\partial t$, where $1/(\mu_{0}\eps)=c^{2}$. Collecting terms gives
\begin{equation}
\frac{\partial u}{\partial t}+\dive\vS=-\vE\cdot\vJ+c^{2}\rho C,
\label{eq:poynting}
\end{equation}
with $u$ and $\vS$ defined in \eqref{eq:uS}. The scalar terms are the energy density $C^{2}/2\mu_{0}$, the flux $C\vE/\mu_{0}$, and the source-exchange term $c^{2}\rho C$.

\section{Far-Zone Fields from the Retarded Potentials}
\label{app:farfield}
Equations~\eqref{eq:farC} through~\eqref{eq:farEL} follow directly from the retarded solutions of \eqref{eq:potwave}. In the frequency domain,
\begin{equation}
\Phi_{\omega}(\vr)=\frac{1}{4\pi\eps}\int \rho_{\omega}(\vr')\,\frac{e^{ik|\vr-\vr'|}}{|\vr-\vr'|}\,d^{3}r',\qquad
\vA_{\omega}(\vr)=\frac{\mu_{0}}{4\pi}\int \vJ_{\omega}(\vr')\,\frac{e^{ik|\vr-\vr'|}}{|\vr-\vr'|}\,d^{3}r'.
\label{eq:retpots}
\end{equation}
Let $r\to\infty$ while the source remains inside $r'\le R$. Then $|\vr-\vr'|=r-\rhat\cdot\vr'+O(r'^{2}/r)$. The kernel becomes $e^{ik|\vr-\vr'|}/|\vr-\vr'|=(e^{ikr}/r)\,e^{-ik\rhat\cdot\vr'}\,[1+O(1/r)]$, the standard far-zone expansion of Ref.~\cite{Jackson}, so the potentials reduce to the on-shell transforms in \eqref{eq:on-shell-rhoJ}.
\begin{equation}
\Phi_{\omega}=\frac{e^{ikr}}{4\pi\eps\,r}\,\calR_{\omega}(\rhat)+O(r^{-2}),\qquad
\vA_{\omega}=\frac{\mu_{0}e^{ikr}}{4\pi r}\,\calJ_{\omega}(\rhat)+O(r^{-2}).
\label{eq:farpots}
\end{equation}
For an outgoing wave, the leading gradient is radial, $\grad\big[f(\rhat)\,e^{ikr}/r\big]=ik\,\rhat\,f(\rhat)\,e^{ikr}/r+O(r^{-2})$. Angular derivatives begin at $O(r^{-2})$. The four far-zone amplitudes follow.

\subsection{Scalar Field}

Using $\partial_{t}\to-i\omega$ in the definition of $C$,
\begin{equation}
\begin{aligned}
C_{\omega}
&=\dive\vA_{\omega}-\frac{i\omega}{c^{2}}\,\Phi_{\omega}\\
&=\frac{e^{ikr}}{4\pi r}\Big[ik\mu_{0}\,\rhat\cdot\calJ_{\omega}(\rhat)-\frac{i\omega}{c^{2}\eps}\,\calR_{\omega}(\rhat)\Big]+O(r^{-2})\\
&=\frac{\mu_{0}e^{ikr}}{4\pi r}\,\big[ik\,\rhat\cdot\calJ_{\omega}(\rhat)-i\omega\,\calR_{\omega}(\rhat)\big]+O(r^{-2}),
\end{aligned}
\end{equation}
where $1/(c^{2}\eps)=\mu_{0}$. This is \eqref{eq:farC}, with $\Lambda_{\mathrm{on},\omega}$ given by \eqref{eq:on-shell-Lambda}.

\subsection{Electric Field}

From $\vE_{\omega}=-\grad\Phi_{\omega}+i\omega\vA_{\omega}=\frac{e^{ikr}}{4\pi r}\big[-\frac{ik}{\eps}\,\rhat\,\calR_{\omega}(\rhat)+i\omega\mu_{0}\,\calJ_{\omega}(\rhat)\big]+O(r^{-2})$, the transverse projector removes the radial $\grad\Phi_{\omega}$ term.
\begin{equation}
\Pperp(\rhat)\vE_{\omega}=\frac{i\omega\mu_{0}e^{ikr}}{4\pi r}\,\Pperp(\rhat)\calJ_{\omega}(\rhat)+O(r^{-2}).
\end{equation}
This is \eqref{eq:farET}. The radial projection keeps both terms.
\begin{equation}
\begin{aligned}
\rhat\cdot\vE_{\omega}
&=\frac{e^{ikr}}{4\pi r}\Big[-\frac{ik}{\eps}\,\calR_{\omega}(\rhat)+i\omega\mu_{0}\,\rhat\cdot\calJ_{\omega}(\rhat)\Big]\\
&=\frac{\mu_{0}c\,e^{ikr}}{4\pi r}\,\big[-i\omega\,\calR_{\omega}(\rhat)+ik\,\rhat\cdot\calJ_{\omega}(\rhat)\big]\\
&=c\,C_{\omega}+O(r^{-2}).
\end{aligned}
\label{eq:radialE}
\end{equation}
Using $1/\eps=\mu_{0}c^{2}$ and $\omega=ck$ gives \eqref{eq:farEL}. The radiation-zone radial electric field is exactly $c$ times the scalar amplitude. This is the far-field form of $C_{0}=E_{0}/c$ in \eqref{eq:slw}. The bracket in \eqref{eq:radialE} is proportional to $\Lambda_{\mathrm{on},\omega}$ and vanishes for a conserved source. The usual transverse nature of Maxwell radiation is therefore the $\Lambda=0$ limit of the same calculation.

\subsection{Magnetic Field}

\begin{align}
\vB_{\omega}
&=\curl\vA_{\omega}=ik\,\rhat\times\vA_{\omega}+O(r^{-2}),\notag\\
&=\frac{ik\mu_{0}e^{ikr}}{4\pi r}\,\rhat\times\calJ_{\omega}(\rhat)+O(r^{-2})\notag\\
&=\frac{1}{c}\,\rhat\times\vE_{T,\omega}+O(r^{-2}).
\end{align}
This is \eqref{eq:farBT} because $\rhat\times\calJ_{\omega}(\rhat)=\rhat\times\Pperp(\rhat)\calJ_{\omega}(\rhat)$ and $\omega=ck$. The omitted $O(1/r)$ kernel correction and angular derivatives contribute to the radial flux only at $O(r^{-3})$ and $O(r^{-4})$. These are the terms discarded in the proof of Theorem~\ref{thm:decomp}.

\section{Scalar Multipole Power Formula}
\label{app:multipole}
For a harmonic source,
$\Lambda(\vr,t)=\tfrac12[\Lambda_{\omega}(\vr)e^{-i\omega t}+\Lambda_{\omega}^{*}(\vr)e^{i\omega t}]$,
and likewise for $C$. The outgoing solution of
$(\nabla^2+k^2)C_{\omega}=-\mu_{0}\Lambda_{\omega}$ is
$C_{\omega}=\mu_{0}\int G\Lambda_{\omega}\,d^3r'$, with
$G=e^{ik|\vr-\vr'|}/(4\pi|\vr-\vr'|)$. Expanding
\[
G=ik\sum_{\ell m}j_{\ell}(kr_<)h_{\ell}^{(1)}(kr_>)Y_{\ell m}(\rhat)Y_{\ell m}^{*}(\rhat')
\]
gives, outside the source,
\begin{equation}
\begin{aligned}
C_{\omega}(\vr)&=i\mu_{0}k\sum_{\ell m}q_{\ell m,\omega}h_{\ell}^{(1)}(kr)Y_{\ell m}(\rhat),\\
q_{\ell m,\omega}&=\int j_{\ell}(kr')Y_{\ell m}^{*}(\rhat')\Lambda_{\omega}(\vr')\,d^3r'.
\end{aligned}
\label{eq:app-multipole}
\end{equation}
Using $h_{\ell}^{(1)}(kr)\to(-i)^{\ell+1}e^{ikr}/(kr)$ and
$\langle S_r\rangle=c|C_{\omega}|^2/(2\mu_{0})$ gives
\begin{equation}
\langle P_L\rangle=\frac{c\mu_{0}}{2}\sum_{\ell m}|q_{\ell m,\omega}|^2.
\end{equation}
For $kR\ll1$,
\begin{equation}
q_{\ell m,\omega}\simeq\frac{k^{\ell}}{(2\ell+1)!!}\mathcal M_{\ell m,\omega},
\qquad
\langle P_{\ell}\rangle\simeq
\frac{c\mu_{0}}{2}\frac{k^{2\ell}}{[(2\ell+1)!!]^2}
\sum_m|\mathcal M_{\ell m,\omega}|^2.
\end{equation}
For $\ell=1$,
$Y_{10}=\sqrt{3/(4\pi)}\cos\theta$ and
$Y_{1,\pm1}=\mp\sqrt{3/(8\pi)}\sin\theta e^{\pm i\phi}$. Hence
$\mathcal M_{10,\omega}=\sqrt{3/(4\pi)}p_{z,\omega}$ and
$\mathcal M_{1,\pm1,\omega}=\mp\sqrt{3/(8\pi)}(p_{x,\omega}\mp ip_{y,\omega})$, so
$\sum_m|\mathcal M_{1m,\omega}|^2=(3/4\pi)|\vp_{\Lambda,\omega}|^2$.

\section{Distributional Conservation of Bound and Carrier Sources}
\label{app:bound}
For $\rho_{b}=-\dive\vP$ and $\vJ_{b}=\partial\vP/\partial t+\curl\vM$, the bulk anomaly is $\Lambda_{b}=\partial\rho_{b}/\partial t+\dive\vJ_{b}=-\dive(\partial\vP/\partial t)+\dive(\partial\vP/\partial t)+\dive(\curl\vM)=0$. At a surface $\Sigma$ where $\vP$ is discontinuous, the singular terms give $\sigma_{b}=\nhat\cdot(\vP_{1}-\vP_{2})$ and $\vK_{b}=(\vM_{1}-\vM_{2})\times\nhat$. The normal $\nhat$ points from region 1 to region 2. At a medium-vacuum interface, $\vP_{2}=\vM_{2}=\zero$. The surface part of $\partial_{t}\rho_{b}+\dive\vJ_{b}=0$ is
\begin{equation}
\partial_{t}\sigma_{b}+\nabla_{\!s}\!\cdot\vK_{b}+\nhat\cdot\big(\vJ_{b,2}-\vJ_{b,1}\big)=0.
\end{equation}
The normal jump of the polarization current supplies $\partial_{t}\sigma_{b}=\nhat\cdot(\partial_{t}\vP_{1}-\partial_{t}\vP_{2})$. The identity $\dive(\curl\vM)\equiv0$ accounts for the magnetization sheet. A discontinuous polarization, including the electret sphere, therefore does not generate $\Lambda$ and cannot excite the scalar channel.

\subsection{Free Compensating Carrier}

The free part of the carrier in Sec.~\ref{sec:electret} is also conserved as a distribution. In the bulk, $\rho_{f}=0$ and $\vJ_{f,\rm car}^{\,\mathrm{vol}}=\omega P_{0}\sin\omega t\,\zhat$ is uniform for $r<R$ and zero outside. Thus $\partial_{t}\rho_{f}+\dive\vJ_{f}=0$ away from the surface. At $r=R$, the surface charge is $\sigma_{f}^{\rm car}=-P_{0}\cos\omega t\cos\theta$, there is no free surface current, and the outward normal jump is $\nhat\cdot(\vJ_{\rm out}-\vJ_{\rm in})=-\omega P_{0}\sin\omega t\,(\zhat\cdot\rhat)=-\omega P_{0}\sin\omega t\cos\theta$. The surface continuity equation is
\begin{equation}
\partial_{t}\sigma_{f}^{\rm car}+\nhat\cdot\big(\vJ_{\rm out}-\vJ_{\rm in}\big)
=\omega P_{0}\sin\omega t\cos\theta-\omega P_{0}\sin\omega t\cos\theta=0.
\end{equation}
Thus $\Lambda_{\rm car}=0$ distributionally, confirming that the complete carrier is silent in both EED and Maxwell theory.

\section{Electret Surface Anomaly and Radiated Power}
\label{app:electret}
The anomalous surface layer introduced in Sec.~\ref{sec:electret} is
\begin{equation}
\Lambda_{a}(\vr,t)=\lambda_{a}\cos(\omega t)\cos\theta\,\delta(r-R).
\end{equation}
One EED realization has $\vJ_{a}=\zero$ and $\sigma_{a}(\theta,t)=(\lambda_{a}/\omega)\sin(\omega t)\cos\theta$ at $r=R$. Then $\partial_{t}\sigma_{a}=\lambda_{a}\cos(\omega t)\cos\theta$, so the surface continuity equation is violated by construction. Since $\calJ_{a,\omega}(\rhat)=0$, the transverse amplitude \eqref{eq:farET} vanishes. The scalar amplitude \eqref{eq:farC} is set by the on-shell moments of $\Lambda_{a,\omega}$ and is nonzero except at the ideal-shell zeros of $j_{1}(kR)$.
\\

The anomaly dipole is
\begin{equation}
\vp_{\Lambda}=\oint_{r=R}\vr'\,\lambda_{s}\,dA =\lambda_{a}\cos(\omega t)R^{3}\!\oint\rhat\,\cos\theta\,d\Omega =\frac{4\pi}{3}R^{3}\lambda_{a}\cos(\omega t)\,\zhat,
\end{equation}
because $\oint\rhat\cos\theta\,d\Omega=\zhat\!\oint\cos^{2}\theta\,d\Omega=(4\pi/3)\zhat$. Thus $p_{\Lambda,0}=(4\pi/3)R^{3}\lambda_{a}$ and the monopole is zero. Substitution into $\langle P\rangle=\frac{\mu_{0}\omega^{2}p_{\Lambda,0}^{2}}{24\pi c}$ from Appendix~\ref{app:multipole}, with $\ell=1$ and $\langle|d\vp_{\Lambda}/dt|^{2}\rangle=\tfrac12\omega^{2}p_{\Lambda,0}^{2}$, gives
\begin{equation}
\langle P\rangle_{\mathrm{SLW}}=\frac{\mu_{0}\omega^{2}}{24\pi c}\left(\frac{4\pi}{3}R^{3}\lambda_{a} \right)^{2}=\frac{2\pi\mu_{0}\,\omega^{2}R^{6}\lambda_{a}^{2}}{27\,c}.
\end{equation}

\subsection[Exact Shell Power at all kR]{Exact Shell Power at all $kR$}

The dipole result is the long-wavelength limit of one exact multipole. With $\Lambda_{a,\omega}=\lambda_{a}\cos\theta\,\delta(r-R)$ and $\cos\theta=\sqrt{4\pi/3}\,Y_{10}$, the moments in \eqref{eq:Cmultipole} are $q_{\ell m,\omega}=\int j_{\ell}(kr')\Ylm^{*}\Lambda_{a,\omega}\,d^{3}r'$. The radial delta function fixes $r'=R$, and $\int\Ylm^{*}\cos\theta\,d\Omega=\sqrt{4\pi/3}\,\delta_{\ell1}\delta_{m0}$ selects one term.
\begin{equation}
q_{\ell m,\omega}=\lambda_{a}R^{2}j_{\ell}(kR)\sqrt{\tfrac{4\pi}{3}}\,\delta_{\ell1}\delta_{m0}, \qquad q_{10,\omega}=\lambda_{a}R^{2}j_{1}(kR)\sqrt{\tfrac{4\pi}{3}}.
\end{equation}
Only the $\ell=1,m=0$ scalar multipole remains. It radiates except at the ideal-shell zeros of $j_{1}(kR)$. Substitution into \eqref{eq:Pmaster} gives
\begin{equation}
\langle P\rangle_{\mathrm{shell}}=\frac{c\mu_{0}}{2}\,|q_{10,\omega}|^{2} =\frac{c\mu_{0}}{2}\cdot\frac{4\pi}{3}\lambda_{a}^{2}R^{4}j_{1}^{2}(kR) =\frac{2\pi c\mu_{0}}{3}\,\lambda_{a}^{2}R^{4}\,j_{1}^{2}(kR).
\end{equation}
This expression is exact at all $kR$ because the shell keeps the same $\ell=1,m=0$ angular form. Since $j_{1}(x)=\sin x/x^{2}-\cos x/x\to x/3$ as $x\to0$, $j_{1}^{2}(kR)\to(kR)^{2}/9$. The exact result then approaches $\frac{2\pi\mu_{0}\omega^{2}R^{6}\lambda_{a}^{2}}{27c}$, with relative corrections of order $O((kR)^{2})$.

\FloatBarrier

\section{Covariant Lagrangian and Field Equations}
\label{app:lagrangian}
Section~\ref{sec:eed} states that the unconstrained potential equations \eqref{eq:potwave} follow from the Lagrangian density \eqref{eq:lagrangian}. This appendix derives that result, takes the four-divergence to recover \eqref{eq:scalarbox}, and relates the gauge variation to charge conservation. The metric signature is $(+,-,-,-)$, with $x^{\mu}=(ct,\vr)$, $\partial_{\mu}=(c^{-1}\partial_{t},\grad)$, $A^{\mu}=(\Phi/c,\vA)$, $A_{\mu}=(\Phi/c,-\vA)$, $J^{\mu}=(c\rho,\vJ)$, $F_{\mu\nu}=\partial_{\mu}A_{\nu}-\partial_{\nu}A_{\mu}$, and $C=\partial_{\mu}A^{\mu}$. The action is $S=c^{-1}\!\int\mathcal{L}\,d^{4}x$ with $d^{4}x=c\,dt\,d^{3}r$.

\subsection{Euler-Lagrange Equations}

For
\begin{equation}
\mathcal{L}_{\mathrm{EED}}=-\frac{1}{4\mu_{0}}F_{\mu\nu}F^{\mu\nu}-\frac{1}{2\mu_{0}}\,C^{2}-J^{\mu}A_{\mu},
\end{equation}
the momenta conjugate to $A_{\mu}$ are
\begin{equation}
\frac{\partial\mathcal{L}_{\mathrm{EED}}}{\partial(\partial_{\nu}A_{\mu})}
=-\frac{1}{\mu_{0}}F^{\nu\mu}-\frac{1}{\mu_{0}}\,\eta^{\nu\mu}C.
\end{equation}
The first term is the standard Maxwell term. The second follows from $C=\eta^{\alpha\beta}\partial_{\alpha}A_{\beta}$, which gives $\partial C/\partial(\partial_{\nu}A_{\mu})=\eta^{\nu\mu}$. Since $\partial\mathcal{L}_{\mathrm{EED}}/\partial A_{\mu}=-J^{\mu}$, the Euler-Lagrange equation $\partial_{\nu}[\partial\mathcal{L}/\partial(\partial_{\nu}A_{\mu})]-\partial\mathcal{L}/\partial A_{\mu}=0$ gives
\begin{equation}
\partial_{\nu}F^{\nu\mu}+\partial^{\mu}C=\mu_{0}J^{\mu}.
\label{eq:covfield}
\end{equation}
Using $\partial_{\nu}F^{\nu\mu}=\partial_{\nu}(\partial^{\nu}A^{\mu}-\partial^{\mu}A^{\nu})=\partial_{\nu}\partial^{\nu}A^{\mu}-\partial^{\mu}C$, the two $\partial^{\mu}C$ terms cancel. Equation~\eqref{eq:covfield} reduces to
\begin{equation}
\partial_{\nu}\partial^{\nu}A^{\mu}=\mu_{0}J^{\mu}.
\label{eq:boxA}
\end{equation}
\\
For the signature $(+,-,-,-)$, $\partial_{\nu}\partial^{\nu}=c^{-2}\partial_{t}^{2}-\nabla^{2}$, so \eqref{eq:boxA} is \eqref{eq:potwave}. The $\mu=0$ component gives $(\nabla^{2}-c^{-2}\partial_{t}^{2})\Phi=-\rho/\eps$, using $J^{0}=c\rho$ and $\mu_{0}c^{2}=1/\eps$. The spatial components give $(\nabla^{2}-c^{-2}\partial_{t}^{2})\vA=-\mu_{0}\vJ$. No gauge condition was imposed. Once boundary or initial data are specified, the $C^{2}$ term removes the algebraic gauge degeneracy of the kinetic operator and determines the longitudinal part of $A^{\mu}$ dynamically.

\subsection{Scalar Equation}

Taking the four-divergence of \eqref{eq:boxA} and using the commutativity of partial derivatives gives
\begin{equation}
\partial_{\nu}\partial^{\nu}(\partial_{\mu}A^{\mu})=\mu_{0}\,\partial_{\mu}J^{\mu}
\qquad\Longrightarrow\qquad
\partial_{\nu}\partial^{\nu}C=\mu_{0}\Lambda.
\end{equation}
\\
Here $\Lambda=\partial_{\mu}J^{\mu}=\partial_{t}\rho+\dive\vJ$, so this is \eqref{eq:scalarbox}. The same result follows by contracting \eqref{eq:covfield} with $\partial_{\mu}$ and using $\partial_{\mu}\partial_{\nu}F^{\nu\mu}\equiv0$, which follows from the antisymmetry of $F$.

\subsection{Gauge Variation and Charge Conservation}

Under $A_{\mu}\to A_{\mu}+\partial_{\mu}\chi$, the field strength $F_{\mu\nu}$ is unchanged, while $C\to C+\partial_{\nu}\partial^{\nu}\chi$. The interaction term changes as $-J^{\mu}A_{\mu}\to-J^{\mu}A_{\mu}-J^{\mu}\partial_{\mu}\chi$. Integrating the added term by parts and dropping the boundary term for localized sources gives
\begin{equation}
\delta S_{\mathrm{int}}=-\frac{1}{c}\int J^{\mu}\partial_{\mu}\chi\,d^{4}x
=\frac{1}{c}\int(\partial_{\mu}J^{\mu})\,\chi\,d^{4}x=\frac{1}{c}\int \Lambda\,\chi\,d^{4}x.
\end{equation}
For compactly supported $\chi$, or sufficient falloff, the source coupling is invariant under arbitrary admissible gauge functions if and only if $\Lambda=0$. Local charge conservation and gauge invariance of the interaction are therefore the same condition. The quadratic term is not invariant under an arbitrary gauge transformation. Its finite change is $\Delta(C^{2})=2C\,\partial_{\nu}\partial^{\nu}\chi+(\partial_{\nu}\partial^{\nu}\chi)^{2}$. It is invariant for every field configuration only under the residual class $\partial_{\nu}\partial^{\nu}\chi=0$. At fixed $C$, the change also vanishes for $\partial_{\nu}\partial^{\nu}\chi=-2C$, but this field-dependent condition is not a symmetry. Three points are useful. The field-strength term is invariant for every $\chi$. The $C^{2}$ term is invariant under the field-independent residual class $\partial_{\nu}\partial^{\nu}\chi=0$. The source coupling is invariant under arbitrary admissible gauge functions only when $\Lambda=0$. The complete source-coupled action is invariant under a residual transformation only when the interaction term is also unchanged. For $\Lambda\neq0$, even a residual transformation generally changes the action by $c^{-1}\!\int\Lambda\chi\,d^{4}x$. When $\Lambda=0$, the retarded source-generated solution with zero scalar initial data has $C\equiv0$, and no observable depends on it. This is the covariantly gauge-fixed Feynman-gauge setting, where the quadratic term follows after eliminating a Nakanishi-Lautrup field~\cite{Nakanishi1966,Lautrup1967}. Homogeneous scalar data remain mathematically possible. When $\Lambda\neq0$, the transformation is no longer a gauge redundancy of the same source-coupled problem. EED then treats the generated $C$ field as physical, and it radiates when the anomaly has nonzero on-shell moments.

\end{appendices}

\FloatBarrier

\section*{Statements and Declarations}
\noindent\textbf{Funding}\quad No funding was received for this study.

\medskip
\noindent\textbf{Competing Interests}\quad The author declares no competing interests.

\medskip
\noindent\textbf{Ethics Approval and Consent to Participate}\quad Not applicable.

\medskip
\noindent\textbf{Consent for Publication}\quad Not applicable.

\medskip
\noindent\textbf{Data Availability}\quad No datasets were generated or analyzed. The figures are schematic or are direct numerical plots of the closed-form expressions given in the manuscript.

\medskip
\noindent\textbf{Materials Availability}\quad Not applicable.

\medskip
\noindent\textbf{Code Availability}\quad A Python script was used only to prepare the figures. The definitions, derivations, inputs, and conclusions are fully stated in the manuscript and do not depend on access to the script.

\medskip
\noindent\textbf{Author Contributions}\quad N.R. conceived the study, completed the analytical derivations, prepared the figures, and wrote the manuscript.

\FloatBarrier


\begin{thebibliography}{99}
\bibitem{Modanese2017} G. Modanese, Mod. Phys. Lett. B \textbf{31}, 1750052 (2017). \url{https://doi.org/10.1142/S021798491750052X}
\bibitem{Ohmura1956} T. Ohmura, Prog. Theor. Phys. \textbf{16}, 684 (1956). \url{https://doi.org/10.1143/PTP.16.684}. Erratum: Prog. Theor. Phys. \textbf{17}, 131 (1957). \url{https://doi.org/10.1143/PTP.17.131}
\bibitem{AharonovBohm1963} Y. Aharonov, D. Bohm, Phys. Rev. \textbf{130}, 1625 (1963). \url{https://doi.org/10.1103/PhysRev.130.1625}
\bibitem{vanVlaenderen2001} K.J. van Vlaenderen, A. Waser, Hadronic J. \textbf{24}, 609 (2001)
\bibitem{HivelyGiakos2012} L.M. Hively, G.C. Giakos, Int. J. Signal Imaging Syst. Eng. \textbf{5}, 3 (2012). \url{https://doi.org/10.1504/IJSISE.2012.046745}
\bibitem{Okun1989} L.B. Okun, Sov. Phys. Usp. \textbf{32}, 543 (1989). \url{https://doi.org/10.1070/PU1989v032n06ABEH002727}
\bibitem{Borexino2015} M. Agostini et al. (Borexino Collaboration), Phys. Rev. Lett. \textbf{115}, 231802 (2015). \url{https://doi.org/10.1103/PhysRevLett.115.231802}
\bibitem{Majorana2024} I.J. Arnquist et al. (Majorana Collaboration), Nat. Phys. \textbf{20}, 1078 (2024). \url{https://doi.org/10.1038/s41567-024-02437-9}
\bibitem{ReedHively2020} D. Reed, L.M. Hively, Symmetry \textbf{12}, 2110 (2020). \url{https://doi.org/10.3390/sym12122110}
\bibitem{MinottiModanese2021} F. Minotti, G. Modanese, Quantum Rep. \textbf{3}, 703 (2021). \url{https://doi.org/10.3390/quantum3040044}
\bibitem{MinottiModanese2021b} F. Minotti, G. Modanese, Symmetry \textbf{13}, 691 (2021). \url{https://doi.org/10.3390/sym13040691}
\bibitem{MinottiModanese2022} F. Minotti, G. Modanese, Quantum Rep. \textbf{4}, 277 (2022). \url{https://doi.org/10.3390/quantum4030020}
\bibitem{Jackson} J.D. Jackson, Classical Electrodynamics, 3rd edn. (Wiley, New York, 1999)
\bibitem{Griffiths} D.J. Griffiths, Introduction to Electrodynamics, 4th edn. (Pearson, Boston, 2013)
\bibitem{Fermi1932} E. Fermi, Rev. Mod. Phys. \textbf{4}, 87 (1932). \url{https://doi.org/10.1103/RevModPhys.4.87}
\bibitem{Stueckelberg1938} E.C.G. Stueckelberg, Helv. Phys. Acta \textbf{11}, 225 (1938). \url{https://doi.org/10.5169/seals-110852}
\bibitem{Nakanishi1966} N. Nakanishi, Prog. Theor. Phys. \textbf{35}, 1111 (1966). \url{https://doi.org/10.1143/PTP.35.1111}
\bibitem{Lautrup1967} B. Lautrup, Kgl. Danske Videnskab. Selskab, Mat.-Fys. Medd. \textbf{35}, No.~11, 1 (1967)
\bibitem{JimenezMaroto2009} J. Beltr\'an Jim\'enez, A.L. Maroto, J. Cosmol. Astropart. Phys. \textbf{2009}(03), 016 (2009). \url{https://doi.org/10.1088/1475-7516/2009/03/016}
\bibitem{JimenezMaroto2011} J. Beltr\'an Jim\'enez, A.L. Maroto, Phys. Rev. D \textbf{83}, 023514 (2011). \url{https://doi.org/10.1103/PhysRevD.83.023514}
\bibitem{MinottiModanese2023} F. Minotti, G. Modanese, Eur. Phys. J. C \textbf{83}, 1086 (2023). \url{https://doi.org/10.1140/epjc/s10052-023-12274-4}
\bibitem{MinottiModanese2024} F. Minotti, G. Modanese, J. Phys. Commun. \textbf{8}, 055003 (2024). \url{https://doi.org/10.1088/2399-6528/ad4e98}
\bibitem{MinottiModanese2026} F. Minotti, G. Modanese, Int. J. Mod. Phys. A \textbf{41}(1), 2650024 (2026). \url{https://doi.org/10.1142/S0217751X26500247}
\bibitem{MinottiModanese2026b} F. Minotti, G. Modanese, Eur. Phys. J. Plus \textbf{141}, 788 (2026).
\url{https://doi.org/10.1140/epjp/s13360-026-08031-7}
\bibitem{MinottiModanese2023b} F. Minotti, G. Modanese, Symmetry \textbf{15}, 1119 (2023). \url{https://doi.org/10.3390/sym15051119}
\bibitem{Ignatiev1979} A.Yu. Ignatiev, V.A. Kuzmin, M.E. Shaposhnikov, Phys. Lett. B \textbf{84}, 315 (1979). \url{https://doi.org/10.1016/0370-2693(79)90048-0}
\bibitem{Dubovsky2000} S.L. Dubovsky, V.A. Rubakov, P.G. Tinyakov, J. High Energy Phys. \textbf{2000}(08), 041 (2000). \url{https://doi.org/10.1088/1126-6708/2000/08/041}
\bibitem{MinottiModanese2025} F. Minotti, G. Modanese, Mathematics \textbf{13}, 892 (2025). \url{https://doi.org/10.3390/math13050892}
\bibitem{Wei2016} Y. Wei, Phys. Rev. E \textbf{93}, 066103 (2016). \url{https://doi.org/10.1103/PhysRevE.93.066103}
\bibitem{Modanese2018} G. Modanese, Mathematics \textbf{6}, 155 (2018). \url{https://doi.org/10.3390/math6090155}
\bibitem{Modanese2017b} G. Modanese, Results Phys. \textbf{7}, 480 (2017). \url{https://doi.org/10.1016/j.rinp.2017.01.009}
\bibitem{Gupta1950} S.N. Gupta, Proc. Phys. Soc. A \textbf{63}, 681 (1950). \url{https://doi.org/10.1088/0370-1298/63/7/301}
\bibitem{Bleuler1950} K. Bleuler, Helv. Phys. Acta \textbf{23}, 567 (1950). \url{https://doi.org/10.5169/seals-112124}
\bibitem{Woodside1999} D.A. Woodside, J. Math. Phys. \textbf{40}, 4911 (1999). \url{https://doi.org/10.1063/1.533007}
\bibitem{Woodside2000} D.A. Woodside, J. Math. Phys. \textbf{41}, 4622 (2000). \url{https://doi.org/10.1063/1.533368}
\bibitem{Woodside2009} D.A. Woodside, Am. J. Phys. \textbf{77}, 438 (2009). \url{https://doi.org/10.1119/1.3076300}

\bibitem{KellerHively2019} O. Keller, L.M. Hively, J. Phys. Commun. \textbf{3}, 115002 (2019). \url{https://doi.org/10.1088/2399-6528/ab5189}
\bibitem{HivelyLand2021} L.M. Hively, M. Land, J. Phys.: Conf. Ser. \textbf{1956}, 012011 (2021). \url{https://doi.org/10.1088/1742-6596/1956/1/012011}
\end{thebibliography}
\end{document}